\documentclass[11pt]{article}

\usepackage[letterpaper,margin=1in]{geometry}
\usepackage[T1]{fontenc}
\usepackage[utf8]{inputenc}
\usepackage{amsmath,amsthm,mathtools} 
\usepackage{newtxtext,newtxmath}
\usepackage{booktabs}
\usepackage{array}
\usepackage{enumitem}
\usepackage{tikz}
\usetikzlibrary{arrows.meta,positioning,calc}
\usepackage{microtype}
\usepackage{natbib}
\usepackage[colorlinks=true,linkcolor=blue,citecolor=blue,urlcolor=blue]{hyperref}
\allowdisplaybreaks

\newtheorem{theorem}{Theorem}[section]
\newtheorem{lemma}[theorem]{Lemma}
\newtheorem{proposition}[theorem]{Proposition}
\newtheorem{corollary}[theorem]{Corollary}
\theoremstyle{definition}
\newtheorem{definition}[theorem]{Definition}
\newtheorem{fact}[theorem]{Fact}
\newtheorem*{theoreminf}{Theorem}
\theoremstyle{remark}
\newtheorem*{remark}{Remark}

\title{Internalizing Chain-of-thought into Looped Transformers}

\author{%
  Kehang Zhou \qquad Pinyan Lu\thanks{Corresponding author.}\\[0.5em]
  {\normalsize Institute for Theoretical Computer Science}\\
  {\normalsize Shanghai University of Finance and Economics}\\[0.3em]
  {\normalsize\texttt{kehang.zhou@icloud.com} \quad \texttt{lu.pinyan@mail.shufe.edu.cn}}%
}
\date{}

\begin{document}
\maketitle
\begin{abstract}

Chain-of-thought (CoT) enables constant-bit-size transformers to perform general-purpose computation through generated tokens \citep{liwang2025constant}, whereas looped transformers carry out the same computation by recurrence in hidden states (\emph{latent thought}). We construct a family of constant-bit-size looped transformers that interpolates between CoT and latent thought. For a prescribed loop count $L=L(n)$ and a fixed integer $k\ge1$ satisfying $s(n)^{1/k}=\Omega(L)$, our transformer simulates a time-$t(n)$, space-$s(n)$ Turing machine using a context window of $O(s(n)/L)$ and $O(t(n)s(n)^{1/k}/L)$ generated tokens, reducing both resources by a factor of $L$ relative to the corresponding unlooped queue-machine simulation. The construction also holds with softmax attention at constant bit-size, under the finite-precision kernel specified in this paper.
\end{abstract}

\section{Introduction}\label{sec:intro}

Transformers \citep{vaswani2017attention} can perform general-purpose computation through recurrence. A central notion is \emph{Turing completeness}: the ability to simulate any Turing machine, the standard mathematical model of algorithmic computation. Informally, a Turing-complete model can carry out any computation that a computer program can perform, given sufficient time and memory.

One form of recurrence is autoregressive generation, in which the model repeatedly appends a token to its sequence and uses the resulting sequence to generate the next token. Chain-of-thought (CoT) uses these generated tokens for intermediate computation \citep{wei2022chain}. The context window---the range of token positions accessible to attention---serves as the visible workspace.

\citet{perez2021attention} proved that transformers with autoregressive generation are Turing-complete. The earliest constructions use precision that grows with the input, together with a full-history context that scales with running time \citep{bhattamishra2020computational, merrill2024expressive}. The full-history regime has a provable ceiling: with a polynomial-time next-token map and polynomial context, standard full-history CoT decides only languages in $\mathsf{P}$, hence not all of $\mathsf{PSPACE}$ unless $\mathsf{P}=\mathsf{PSPACE}$ \citep{yang2025pencil}. Sliding-window attention and summarized CoT then bring the visible context down from the running time to the space bound $O(s(n))$ \citep{schuurmans2024autoregressive, yang2025pencil, brosamle2026expressive}, while precision is reduced first to $O(\log(n+t(n)))$ and then to a constant \citep{merrill2024expressive, li2024chain}. \citet{liwang2025constant} complete this line: a constant-bit-size transformer that attends only at fixed relative offsets is Turing-complete. \citet{liwang2025efficient} then bring the cost down to $O(s(n)^{1/k})$ CoT tokens per simulated machine step by increasing the head--layer product.

Looped transformers provide the complementary form of recurrence, in depth: a shared block is applied repeatedly to the hidden state, performing multiple iterations before emitting a token and carrying intermediate results in latent representations rather than an explicit CoT trace \citep{giannou2023looped, yang2024looped, geiping2025scaling}. This internal reasoning is a form of \emph{latent thought}~\citep{saunshi2025reasoning, xusato2026formal}, and \citet{saunshi2025reasoning} show that loop iterations can simulate CoT steps one-for-one. How does moving computation from generated tokens into hidden-state recurrence change the context window and the number of generated tokens required for Turing-machine simulation?

\paragraph{Our contribution.} We construct a family of constant-bit-size looped transformers that interpolates between explicit CoT and latent thought. One shared decoder block is applied $L$ times per generated token, so part of the computation moves from the generated trace into hidden activations. An intermediate queue machine represents the Turing-machine computation, and each of its transitions can be realized either as an explicit CoT step or as one iteration of that block. Relative to the queue-machine simulations of \citet{liwang2025constant, liwang2025efficient}, grouping $L$ simulated transitions into each generated token reduces both the context window and the CoT length by a factor of $L$.

The mechanism is a per-iteration KV cache read through a sliding window. Iteration $l$ writes its result into the iteration-$l$ cache, the cached entries together hold the $\Theta(s(n))$ machine cells, and attention at that iteration reads only $O(s(n)/L)$ positions. A context window of that length therefore completes the simulation of an $s(n)$-space machine. The learned parameters are fixed, machine-dependent constants: they do not grow with $n$ or with $L$. The positional encoding, the context length, the loop count, and the initial cache are instantiated from the resource bounds. The same construction holds with softmax attention under the finite-precision kernel of Section~\ref{sec:softmax}. In the statement below, $k'$ is the number of tapes of the simulated machine and $k$ is the hierarchy parameter of the queue simulation (Section~\ref{sec:models}).
\begin{theoreminf}[informal]
Any $(t(n),s(n))$ time-space bounded Turing machine can be simulated by a constant-bit-size looped transformer (with its input pre-loaded into the initial KV cache as specified in Section~\ref{sec:initialization}) with head--layer product $K=8kk'$ and context window length $O(s(n)/L)$, that takes $O(s(n)^{1/k}/L)$ CoT steps for each simulated TM step, leading to a total CoT length of $O(t(n)\cdot s(n)^{1/k}/L)$, for every loop count $L=L(n)$ and every fixed integer $k\ge1$ with $s(n)^{1/k}=\Omega(L)$. At $L=1$, the bounds recover \citep[Thm~1]{liwang2025efficient}.
\end{theoreminf}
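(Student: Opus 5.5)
We would prove the statement in two stages, reusing the queue-machine reduction of \citet{liwang2025efficient} as a black box and changing only how its transitions are scheduled onto generated tokens and loop iterations.

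\textbf{Stage 1: reduction to a queue machine.} By the reduction underlying \citep[Thm~1]{liwang2025efficient}, a $k'$-tape Turing machine with bounds $(t(n),s(n))$ is simulated by a hierarchical ($k$-level) queue machine with $O(ks(n))$ queue cells. Each TM step costs $O(s(n)^{1/k})$ queue transitions. Each transition reads a symbol at a fixed relative offset behind the head of the queue, appends one symbol, and applies a constant-size transition table; the table is realizable with head--layer product $8kk'$. We take this as given. At $L=1$ this is exactly their construction, which gives the recovery clause.

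\textbf{Stage 2: folding $L$ transitions into one token.} We lay out the queue as an $L\times m$ array with $m=\lceil c\,s(n)/L\rceil$. Generated token $j$, at loop iteration $l\in\{1,\dots,L\}$, is responsible for queue transition number $(j-1)L+l$.

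The shared block at iteration $l$ does three things:
\begin{itemize}[nosep]
\item it reads its operand from the iteration-$l$ KV cache;
\item it applies the queue transition;
\item it writes the appended symbol into its own hidden state, which becomes the iteration-$l$ cache entry of the current token.
\end{itemize}
A queue offset of $\Theta(s(n))$ cells then becomes an offset of about $s(n)/L$ positions, taken either in the same iteration-$l$ cache or in an adjacent one.

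We would choose the interleaving so that the cell needed at transition $(j-1)L+l$ always lies in a cache of index $l$ or $l-1$. The $l-1$ case is handled by passing the value forward in the residual stream between iterations. The offset must be constant per hierarchy level. The condition $s(n)^{1/k}=\Omega(L)$ guarantees that the $L$ transitions packed into one token never depend on one another within the queue's lookback, except through this constant-offset in-register forwarding. So the context window is $O(s(n)/L)$, the token count is (number of transitions)$/L=O(t(n)s(n)^{1/k}/L)$, and the per-step cost is $O(s(n)^{1/k}/L)$.

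Because the block is shared, the iteration index $l$ must be computable from constant-bit state. We would carry $l \bmod c$ and phase bits in the hidden state, updated each iteration. Relative-offset attention then needs only the fixed offsets, which the positional encoding supplies.

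\textbf{Initialization and softmax.} The input and the initial queue contents are pre-loaded into the per-iteration caches in this same interleaved layout, as in Section~\ref{sec:initialization}. For softmax, each attention head targets a unique relative offset. Under the finite-precision kernel of Section~\ref{sec:softmax}, the weight on that offset dominates and rounds the output exactly to hard attention, so the hard-attention argument transfers verbatim.

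\textbf{Main obstacle.} The hardest step is the dependency-scheduling argument in Stage 2. We must show that, with a layout parameterized only by constants and $L$, every queue read lands at a fixed relative offset within the window in a cache already written, and that the hierarchy's varying level sizes $s^{i/k}$ align with multiples of $L$. We would first try padding each level to a multiple of $L$, which $s^{1/k}=\Omega(L)$ permits at constant-factor cost, and then verify the offsets level by level.
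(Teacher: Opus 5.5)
Your outline has the right bookkeeping: one queue transition per iteration, $L$ per token, and reads at a relative offset of about $\ell_r/L$ positions. The step the whole proof rests on is that every operand sits in the \emph{same} iteration's cache at an offset independent of $l$. You assert this rather than prove it, and the mechanisms you offer for it fail.

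In the paper's model the iteration-$l$ entry of position $j$ is computed from the residual \emph{entering} iteration $l$. So $V(j,l)=\sigma_{Lj+l}$ is what iteration $l-1$ produced (the token itself for $l=0$). Your statement that iteration $l$'s appended symbol ``becomes the iteration-$l$ cache entry'' is therefore off by one, and that off-by-one decides the residue class. Iteration $l$ of pass $i$ needs $\sigma_{Li+l-\ell_r+1}$. Writing $\ell_r-1=Lq+\rho$ with $0\le\rho<L$, that entry lies in iteration $(l-\rho)\bmod L$. Several consequences follow:
\begin{itemize}
\item For arbitrary lengths $\rho$ is arbitrary, so ``index $l$ or $l-1$'' is false. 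No choice of interleaving fixes it: with one transition per iteration, the sequential state forces transition $Li+l$ onto pass $i$, iteration $l$, so only the lengths are free.
\item Forwarding through the residual stream moves values only to later iterations. It costs $\rho$ registers, plus $\rho$ extra symbols in each token at pass boundaries, which is not constant once $\rho$ grows with $L$.
\item Your proposed padding to multiples of $L$ gives $\rho=L-1$. For $l<L-1$ the operand then sits in iteration $l+1$, which iteration $l$ cannot read and forwarding cannot supply.
\end{itemize}
What is needed is $\rho=0$, i.e.\ $\ell_r\equiv1\pmod L$, which gives the offset $w_r-1=(\ell_r-1)/L$ at every iteration. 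That uniformity is also why the shared block never needs to know $l$. Carrying ``$l\bmod c$'' is unnecessary, and could not encode $l$ anyway when $L$ grows with $n$.

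The second missing piece is imposing this congruence on the whole hierarchy at constant-factor cost. Padding levels one at a time does not in general preserve the divisibility between consecutive level lengths that the cited conversion uses. The paper instead takes a common base $b=\min\{b'\ge\max\{B,L+1\}: b'\equiv1\pmod L\}$ and sets $\ell_g=b^g$, so every level is $\equiv1\pmod L$ and consecutive lengths still divide. The hypothesis $s(n)^{1/k}=\Omega(L)$ enters here:
\begin{itemize}
\item $b\le\max\{B,L+1\}+L-1=O(s(n)^{1/k})$, which keeps the run length $O(t(n)s(n)^{1/k})$ and the window $w=O(s(n)/L)$;
\item $b\ge L+1$ gives $w_r\ge2$.
\end{itemize}
Finally, each level's buffer has length $2b^g\equiv2\pmod L$, which for $L\ge2$ is not $\equiv1$, so it must be split into two queues of length $b^g$. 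This split turns the $6kk'$ queues of the cited conversion into $K=8kk'$. The head count is therefore not part of the black box you invoke. With these lengths, the invariant $V(i,l)=\sigma_{Li+l}$ follows by induction over passes and iterations, and the token count is $\lceil\tau(n)/L\rceil$.
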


For the constructed simulation, explicit and latent reasoning are not two capabilities but two placements of the same general-purpose computation realized by recurrence. Decreasing $L$ externalizes that computation as generated tokens. Increasing $L$ folds that many consecutive CoT steps into each token as latent iterations. The loop count is the exchange rate between the external trace and the internal computation.

\noindent\textbf{Proof idea.} We fold $L$ consecutive machine steps of an existing CoT simulation into $L$ iterations of the shared block, replacing $L$ generated tokens by one. The obstacle is layer alignment: in ordinary CoT every past token is visible to every attention layer, whereas iteration $l$ of a looped block reads only iteration-$l$ cache entries of earlier positions, so the inputs consumed at a simulated step must sit in the cache at the layer that executes it. Li--Wang's queue-machine simulation admits the required reorganization: each simulated step's queue fronts are cached at the iteration that consumes them. Section~\ref{sec:example} shows the alignment on one queue.

\subsection*{Potential messages for real LLMs}

The same construction suggests several questions for practical reasoning models. Each is a conjecture: none is backed by an experiment, and none is a claim about a deployed model.
\begin{itemize}
    \item Looped transformers have the expressive capacity to internalize a substantial part of a chain of thought into recurrent state. Raising the loop count replaces generated reasoning tokens by computation inside the loop, and general-purpose computation is preserved. The loop count is then a way to separate reasoning computation from the tokens a reader sees.
    \item The number of generated CoT tokens need not measure the serial computation a model performs. In the simulation, one generated token summarizes $L$ consecutive steps, and the intermediate states stay in the recurrent state. Visible length and underlying computation can therefore diverge.
    \item At large $L$ this creates a learning difficulty. The explicit trace is strided: only every $L$-th state is emitted as a token, and the steps in between stay latent. Next-token prediction does not supervise those latent steps, so the computation may be difficult to acquire from pretraining alone.
    \item The decoupling of explicit reasoning trace and latent computation exposes a potential learning challenge at large $L$: the explicit reasoning trace becomes increasingly strided, since only every $L$-th computational state is exposed as a token, while intermediate computation remains latent. Such a trajectory is not directly supervised by the standard next-token prediction objective, which conjecturally could make this form of computation difficult to acquire through pretraining alone.
    \item The construction also suggests how a long-reasoning model might control latent computation, persistent state, and visible tokens separately: take the loop count as an explicit budget for latent computation, and supervise only selected boundary states, so that not every internal step has to be written out as text.
\end{itemize}

Section~\ref{sec:models} fixes the models of computation, and Section~\ref{sec:model} the looped transformer. Section~\ref{sec:hardmax} gives the hardmax construction and its main theorem, with the initialization interface stated explicitly (Section~\ref{sec:initialization}) and a worked example at the end (Section~\ref{sec:example}). Section~\ref{sec:softmax} lifts it to softmax at constant bit-size. Section~\ref{sec:related} discusses related work; Appendix~\ref{app:congruence} constructs congruent queue lengths, Appendix~\ref{app:components} records the transformer components, Appendix~\ref{app:softmax} the exact softmax kernel contract, and Appendix~\ref{app:comparison} the detailed comparison table; Appendix~\ref{app:notation} collects notation.

\section{Models of computation}\label{sec:models}

We simulate Turing machines through an intermediate model, a synchronous multi-queue machine, following \citet{liwang2025constant, liwang2025efficient}.

\subsection{Turing machines and queue machines}

\paragraph{Multi-tape Turing machine.}
A $k'$-tape Turing machine (TM; $k'$ is the tape count) is a tuple $\langle \Sigma, Q, \delta \rangle$ where
\begin{itemize}
\item $\Sigma$ is a finite tape alphabet, including $0$, $1$, and a blank symbol $\#$.
\item $Q$ is a finite set of states, including a start state $q_{\mathrm{start}}$ and halting states $q_{\mathrm{acc}}, q_{\mathrm{rej}}$.
\item $\delta : Q \times \Sigma^{k'} \to Q \times \Sigma^{k'-1} \times (\{\mathrm{Left}, \mathrm{Stay}, \mathrm{Right}\})^{k'}$ is a transition function.
\end{itemize}
The first tape is a read-only input tape, initially holding a finite $\{0,1\}$-string of input $x$; the remaining $k'{-}1$ tapes start empty. A TM is $(t(n), s(n))$ time-space bounded if, on every input of length $n$, it halts within $t(n)$ steps and visits at most $s(n)$ distinct cells across the $k'$ tapes. Throughout the paper, $k'$ always denotes the tape count, leaving $k$ free for the hierarchy exponent of Fact~\ref{fact:queue} below.

\paragraph{Synchronous multi-queue machine.}
The intermediate model is a synchronous multi-queue machine: the multi-queue form of the Post machine \citep{post1936finite} in the variant of \citet[\S2.2]{liwang2025efficient}. Each queue is a first-in, first-out tape, read at the front and written at the rear. At every step every queue pops its front symbol and appends one symbol, so the queues move in lockstep. There is no separate stay-or-move choice; every front advances by one cell.

\begin{itemize}
\item $\Sigma$ is a finite alphabet, containing $0$, $1$, and the blank $\#$.
\item $Q$ is a finite state set, containing the start state $q_{\mathrm{start}}$ and the halting states $q_{\mathrm{acc}}, q_{\mathrm{rej}}$.
\item $\delta : Q \times \Sigma^K \to Q \times \Sigma^K$ reads the $K$ front symbols and returns the next state together with the $K$ symbols appended at the rears.
\end{itemize}

Initially the input queue holds the input string and the other $K{-}1$ queues hold only blanks. The space of a queue machine is the maximum number of symbols stored in its queues during the computation, over inputs of length $n$. In the rest of the paper, $s(n)$ is only the simulated Turing machine's space bound; the queue machine uses $O(s(n))$ space.

We take queue $0$ to be the input queue, and it is the longest. Every halting state is absorbing: for $q \in \{q_{\mathrm{acc}}, q_{\mathrm{rej}}\}$ and every $K$-tuple $\sigma \in \Sigma^K$, set $\delta(q, \sigma) = (q, \#, \dots, \#)$. The convention makes every run an infinite log. It adds at most $L{-}1$ extra transitions in the final pass; the halting verdict is unchanged (Proposition~\ref{prop:queue}).

\begin{definition}[K-queue execution log]\label{def:log}
Let $M = (\Sigma, Q, \delta)$ be a synchronous $K$-queue machine with fixed queue lengths $\ell_0, \dots, \ell_{K-1}$, on input $x = x_0 \cdots x_{n-1}$ of length $n \le \ell_0$, and let $s_{\max}=\max_r\ell_r$. Its execution log is $(\sigma_m, q_m)_{m \ge 0}$, where $\sigma_m = (\sigma^{(0)}_m, \dots, \sigma^{(K-1)}_m)$. The entries with $m \le s_{\max}-1$ are given directly: the input feed, for $0 \le m \le n-1$,
\[
(\sigma_m,\, q_m) = (x_m,\ \#^{K-1},\ q_{\mathrm{start}}),
\]
and the padding, for $n \le m \le s_{\max}-1$,
\[
(\sigma_m,\, q_m) = (\#^{K},\ q_{\mathrm{start}}).
\]
For $m \ge s_{\max}-1$,
\[
(q_{m+1},\, \sigma_{m+1}) = \delta\big(q_m,\ \sigma^{(0)}_{m-\ell_0+1},\ \dots,\ \sigma^{(K-1)}_{m-\ell_{K-1}+1}\big),
\]
every front index $m-\ell_r+1 \ge s_{\max}-\ell_r \ge 0$ being given or previously produced; the absorbing convention makes the log infinite. The remaining run length after the cache is filled is
\[
\tau(n)=\max\big(1,\ \min\{m \mid q_m \in \{q_{\mathrm{acc}}, q_{\mathrm{rej}}\}\}-(s_{\max}-1)\big).
\]
\end{definition}

The step-$m$ front of queue $r$ is the symbol appended at step $m-\ell_r+1$; $\tau(n)$ counts the transitions from step $s_{\max}-1$, the first step whose fronts are available, to the first halting state.

The conversion from Turing machines to this model is cited, not re-derived:

\begin{fact}[{Queue Turing machines are Turing-complete \citep[Thm.~2, Step~I]{liwang2025efficient}}]\label{fact:queue}
For every fixed integer $k\ge1$, a $t(n)$-time, $s(n)$-space $k'$-tape Turing machine converts to a synchronous $K=6kk'$-queue machine running in $O(t(n)s(n)^{1/k})$ steps and $O(k's(n))$ total queue space. Each tape becomes two stacks and each stack $k$ levels of queues. Writing $B=\lceil s(n)^{1/k}\rceil$, level $g$ ($1\le g\le k$) uses two content-half queues of length $B^g$ and one buffer of length $2B^g$; consecutive half lengths divide one another, and the cost analysis is in terms of these lengths \citep[\S3]{liwang2025efficient}.
\end{fact}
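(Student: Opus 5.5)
Since the statement is cited from \citet[Thm.~2, Step~I]{liwang2025efficient}, I would reconstruct the conversion in two stages and then count resources. \textbf{Stage 1 (tapes to stacks):} represent each of the $k'$ tapes by two stacks. The left stack holds the cells to the left of the head, top nearest the head. The right stack holds the head cell and everything to its right. A TM move then becomes a constant number of push/pop/peek operations per tape: a Left move pops the left stack and pushes onto the right, a Right move does the reverse, and a Stay rewrites the top. This gives $2k'$ stacks, each of depth at most $s(n)$, and constant overhead per simulated step.

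\textbf{Stage 2 (stacks to synchronous queues):} implement each stack by $k$ levels of queues with $B=\lceil s(n)^{1/k}\rceil$. Level $g$ stores a block of up to $B^g$ stack symbols in two content-half queues of length $B^g$, plus a buffer of length $2B^g$ used when transferring data between levels.
\begin{enumerate}[label=(\roman*)]
\item Because the machine is synchronous, every queue rotates by one cell per step. A queue of length $\ell$ therefore returns to a given cell every $\ell$ steps. Making consecutive half lengths divide one another means that all levels are simultaneously at their "origin" at multiples of the larger period, so rotations stay phase-aligned.
\item A stack operation is served at level $1$ within one rotation of length $O(B)$.
\item When level $g$ overflows or underflows, one half-block of size $B^g$ is moved to or from level $g{+}1$ through the buffer. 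This costs $O(B^{g+1})$ rotation steps.
\item Such a transfer can be triggered only after $\Omega(B^g)$ operations have hit level $g$.
\end{enumerate}
Amortizing as for a multi-level counter, the cost per stack operation is $\sum_{g} O(B^{g+1}/B^g)=O(kB)=O(s(n)^{1/k})$ for fixed $k$. The finite control of the queue machine stores the TM state together with the bounded-size phase and bookkeeping information of all levels. The total queue space is $\sum_{g\le k} 4B^g\cdot 2k' = O(k' B^k)=O(k's(n))$ up to constants. The queue count is $k'\cdot 2\cdot k\cdot 3 = 6kk'$.

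\textbf{Main obstacle:} lockstep movement. In a synchronous machine every queue must pop and append each step, so an "idle" level cannot pause. I would handle this by having idle queues re-append their front symbol, so each queue simply rotates. Pending operations are then scheduled so that they wait for the correct phase; this wait is what produces the $O(B^{g+1})$ rather than $O(B^g)$ transfer cost above, and the divisibility of lengths is what keeps the waits bounded. A second subtlety is that the amortized bound must be converted into a total time bound. Over $t(n)$ steps the total work is $O(t(n)s(n)^{1/k})$ plus an initial $O(s(n))$ cost, which is absorbed because $s(n)\le t(n)$ for machines that read their inputs. That yields the stated running time.
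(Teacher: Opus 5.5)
The paper does not prove this Fact. It imports the conversion from Li and Wang and relies only on the level structure, the divisibility of the lengths, and the resulting cost bounds, which Appendix~\ref{app:congruence} then modifies. Your reconstruction reproduces that structure: two stacks per tape; $k$ levels per stack, each with two halves of length $B^g$ and a buffer of length $2B^g$; a wait of $O(B^{g+1})$ per level-$g$ transfer amortized against $\Omega(B^g)$ stack operations; $6kk'$ queues; and $O(k's(n))$ space. The arithmetic is right. However, one ingredient is missing and one claim rests on the wrong justification.

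\textbf{Unbounded information in the control.} The step that fails as written is that the finite control holds ``the bounded-size phase and bookkeeping information of all levels.'' The state set of the queue machine is fixed independently of $n$. By contrast, the rotation phase of a level-$g$ queue and the occupancy of level $g$ each range over $\Theta(B^g)$ values, which grow with $n$. So ``waiting for the correct phase'' and detecting overflow or underflow cannot be done by counting in the control.

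You need to enlarge the alphabet with marks carried in the queue cells: an origin mark in each queue, and marks delimiting the occupied part and the top of each level. The control then keeps only bounded data, such as which operation is pending and which half is currently the upper one. With these marks, divisibility does two jobs:
\begin{itemize}
\item It makes the joint period of levels $g$ and $g+1$ equal to $B^{g+1}$, which is your $O(B^{g+1})$ wait.
\item It \emph{fixes} the relative phase of the two levels. A half-block from level $g$ can therefore only be copied onto offsets of level $g+1$ congruent to its own start modulo $B^g$, so the layout must keep each level's contents aligned to multiples of the block size of the level below.
\end{itemize}

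\textbf{Justification of claim (iv).} Claim (iv) is not a multi-level-counter fact. A counter-style scheme thrashes when pushes and pops alternate at a level boundary, just as a binary counter does under alternating increments and decrements. What gives $\Omega(B^g)$ stack operations between consecutive level-$g$ transfers is the hysteresis of the two halves. Level $g$ holds up to $2B^g$ symbols (not $B^g$, as your wording suggests), and each transfer leaves it holding one full half. That puts it $\Theta(B^g)$ symbols away from both full and empty.

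With these two repairs, your sketch yields the stated bounds.
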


Two remarks on scope. First, the level structure and the cost bounds are what we cite; the congruent base $b$, the single length per level, the buffer split, and the resource bookkeeping are argued in Appendix~\ref{app:congruence}. Second, $k$ is fixed before instantiation (it sets the head count $K=8kk'$) and $L$ is prescribed afterwards, subject to $s(n)^{1/k}=\Omega(L)$.

\section{The model: looped transformers with a per-iteration cache}\label{sec:model}

\subsection{Decoder-only transformers with sliding-window attention}

A decoder-only transformer \citep{vaswani2017attention} is a parameterized next-token map $\mathcal{V}^{*} \to \mathcal{V}$. The forward pass for position $i$ sets $h_i^{(0)} = \mathrm{emb}(v_i)$ and applies, at each layer $l$, an attention sublayer followed by a feed-forward network, both with residual connections:
\[
\begin{multlined}[c][0.98\textwidth]
h_i^{(l+0.5)} = W_O^{(l)} \cdot \mathrm{concat}_{h}\Big(\sum_{j} s^{(l)}_{h,i,j}\cdot W_v^{(l,h)}\, h_j^{(l)}\Big), \qquad \tilde{h}_i^{(l)} = h_i^{(l+0.5)} + b^{(l)} + h_i^{(l)}, \\
h_i^{(l+1)} = \mathrm{FF}^{(l)}\big(\tilde{h}_i^{(l)}\big) + \tilde{h}_i^{(l)},
\end{multlined}
\]
where the inner sums range over the context window and the concatenation runs over the $H$ heads. The learned parameters comprise the embedding and output head, query and key matrices $W_q^{(l,h)},W_k^{(l,h)}\in\mathbb{R}^{d_k\times d}$, value matrices $W_v^{(l,h)}\in\mathbb{R}^{d_v\times d}$, output matrices $W_O^{(l)}\in\mathbb{R}^{d\times H d_v}$, biases $b^{(l)}$, and feed-forward networks $\mathrm{FF}^{(l)}$. A relative positional function $\mathrm{pos}^{(l)}:\mathbb{N}\to\mathbb{R}^{d_k}$ is specified separately and enters the \emph{key side} of the scores: the weights $s^{(l)}_{h,i,\cdot}$ form the hardmax distribution (all weight on the score's maximizer) over scores $\langle W_q^{(l,h)}h_i^{(l)},\,W_k^{(l,h)}h_j^{(l)}+\mathrm{pos}^{(l)}(i-j)\rangle$. The next token is $v_{i+1}=\mathrm{argmax}(W_{\mathrm{out}}h_i^{(L)})$ at the last layer, for a decoder of depth $L$.

In the \textbf{sliding-window} setting with window $w$, position $i$ attends to $\{i-w+1, \ldots, i\}\cap \mathbb{N}$. The \textbf{KV cache} is indexed by (position, layer): $W_k^{(l,h)} h_j^{(l)}$ and $W_v^{(l,h)} h_j^{(l)}$ are written and fixed when position $j$ completes layer $l$.

\begin{remark}\label{rem:span}
With sliding-window attention and a persistent KV cache, a layer-$L$ transformer can depend on $L(w-1)+1$ positions, more than the visible window. This span alone does not provide the alignment needed for simulation: each queue value must still sit in the correct iteration's cache (Section~\ref{sec:example}).
\end{remark}

\subsection{Looped transformers}

Looped transformers re-apply a fixed unit instead of stacking independently parameterized layers \citep{giannou2023looped, yang2024looped, geiping2025scaling}. We adopt a \emph{weight-tying} version \citep{dehghani2019universal, lan2020albert}: the unit acts $L$ times on the current position's residual stream per generated token, with a separate per-iteration KV cache carrying state across positions.

\begin{definition}[looped transformer]\label{def:looped}
A \emph{looped transformer} is a decoder-only transformer consisting of

\begin{itemize}
\item a \textbf{loop unit} $u_{\varphi}$: a stack of $p \ge 1$ decoder layers specified by the shared weights $\varphi$;
\item the embedding $\mathrm{emb}$ and the output head $W_{\mathrm{out}}$, applied outside the loop;
\end{itemize}

together with a \textbf{loop count} $L \in \mathbb{N}$. Given $L$, the forward pass computes
\[
\begin{aligned}
h_i^{(0)} &= \mathrm{emb}(v_i), \qquad
h_i^{(l+1)} = u_{\varphi}\big(h_i^{(l)};\, \mathcal{C}_i^{(l)}\big)
\quad (0\le l<L),\\
v_{i+1} &= \mathrm{argmax}\big(W_{\mathrm{out}}\, h_i^{(L)}\big),
\end{aligned}
\]
the $l$-th application of $u_\varphi$ being iteration $l$ (layer and iteration are used interchangeably). Unrolling the loop yields a $pL$-layer transformer in which each iteration applies the same $p$ layers. Shared parameters are independent of $L$.
\end{definition}

\paragraph{Reading rule (per-iteration cache).} The cache $\mathcal{C}_i^{(l)}$ holds the iteration-$l$ key and value entries of the retained positions $j<i$. Iteration $l$ of position $i$ attends \emph{only} to those same-iteration entries; the current position's key and value are computed from $h_i^{(l)}$ and are frozen when written. A fixed iteration never reads another iteration's cache. Section~\ref{sec:example} shows the discipline on one queue, and the construction has to follow it.

A looped transformer is \emph{constant-bit-size} if the number of distinct learned parameters, their bit precision, and the bit precision of scalar intermediate values are all $O(1)$, independently of input length, loop count, and generation length.

\section{The hardmax construction}\label{sec:hardmax}

We first construct the simulation with hardmax attention; Section~\ref{sec:softmax} realizes the same selection exactly under a finite-precision softmax kernel. Throughout this section, queue $r$ has length $\ell_r\equiv1\pmod L$ (queue lengths satisfying this congruence are constructed in Appendix~\ref{app:congruence}). Set $w=\max_r w_r$ and $s_{\max}=\max_r\ell_r$. Each forward pass executes $L$ consecutive machine steps. Head $r$ reads the front of queue $r$ at distance $w_r-1$: attention retrieves that front by a fixed relative offset, and the feed-forward network applies the transition.

\subsection{Tokens and residual layout}\label{sec:tokens}

At nonterminal position $i$, the token is $(q_{Li},\sigma_{Li})$, where $\sigma_{Li}=(\sigma^{(0)}_{Li},\dots,\sigma^{(K-1)}_{Li})\in\Sigma^K$ is the tuple appended at log index $Li$. Generated tokens are thus machine-state symbols, not natural-language steps.

The embedding uses the layout specified in Appendix~\ref{app:components}:
\[
h_i^{(0)}
=
\bigl(1,\ e_{q_{Li}},\ \mathrm{onehot}(\sigma^{(0)}_{Li}),\dots,\mathrm{onehot}(\sigma^{(K-1)}_{Li}),\ 0^{|\Sigma|K}\bigr).
\]
The blocks store respectively a constant, the current state, the appended tuple, and $K$ scratch registers for retrieved fronts, giving $d=2|\Sigma|K+|Q|+1=\Theta(K)$. At iteration $l$, attention writes the fronts and the feed-forward network applies $\delta$, leaving the residual equal to the embedding of $(q_{Li+l+1},\sigma_{Li+l+1})$. After iteration $L-1$, the output head emits $(q_{L(i+1)},\sigma_{L(i+1)})$, or ACCEPT/REJECT if the state is halting, and the generation stops.

\subsection{Positional encoding and the attention heads}\label{sec:positional}

The relative positional encoding is the simplified form of \citet[Eq.~(2)]{liwang2025efficient}: writing $e_r$ for the basis vector of $\mathbb{R}^{d_k}$ indexed by head number $r$ (Appendix~\ref{app:components} fixes the indexing conventions), $d_k=K$,
\[
\mathrm{pos}(i-j)\;=\;\Delta\!\!\sum_{r:\, i-j=w_r-1}\!\! e_r ,
\]
whose $r$-th coordinate is $\Delta$ exactly at distance $i-j=w_r-1$. The query of head $r$ is the constant $e_r$ and the key is set to be $0$. The positional encoding enters the key side of the scores (Section~\ref{sec:model}). Thus
\[
\mathrm{score}_r(i,j)
=
\left\langle e_r,\ \mathrm{pos}(i-j)\right\rangle
=
\Delta[\,i-j=w_r-1\,].
\]
For every pass $i\ge w-1$, $j^*=i-w_r+1$ is the unique maximizer, so hardmax selects queue $r$'s designated position: the offset $w_r-1$ encodes ``front of queue $r$ at the current iteration,'' and Lemma~\ref{lem:invariant} proves that the entry selected is the front consumed at step $Li+l$. Equal-length queues remain separate via distinct heads and value projections. Appendix~\ref{app:components} gives the matrices.

\paragraph{Value entries and transition.} Let $V_r(j,l)=W_v^{(r)} h_j^{(l)}\in\mathbb{R}^{d_v}$, $d_v=|\Sigma|$, be head $r$'s cached value at position $j$ and iteration $l$, and let $V(j,l)$ denote the represented $K$-tuple. Each $W_v^{(r)}$ projects symbol block $r$; $W_O$ writes the selected values into the front blocks, and the feed-forward network applies $\delta$. All $L$ iterations share the same positional encoding, value projections, and $\delta$ lookup, so the construction satisfies Definition~\ref{def:looped}.

\subsection{The loop invariant}\label{sec:initialization}

\paragraph{Initialization.} Generation begins at pass $w-1$ with a loaded cache: for $0\le i\le w-2$ and $0\le l\le L-1$, set the cached value entry $V(i,l)=\sigma_{Li+l}$, and set the current token to $(q_{L(w-1)},\sigma_{L(w-1)})$. Since $L(w-1)=s_{\max}-1$, the first simulated transition occurs at step $s_{\max}-1$; the remaining run length is $\tau(n)$ (Definition~\ref{def:log}).

We treat this initialization as an explicit \emph{interface assumption}. It prescribes which entries a working system must place in the cache before generation; the cost of producing them is not included in any bound of this paper, and we do not prove that a natural-order prompt prefill computes them.

\begin{lemma}\label{lem:invariant}
For every position $i$ and iteration $l$, the value entry $V(i, l)$ encodes the $K$-tuple at log index $Li + l$: block $r$ holds $\sigma^{(r)}_{Li+l}$.
\end{lemma}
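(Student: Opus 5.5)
Consider the linear order on pairs $(i,l)$ in which $(i,l)$ precedes $(i',l')$ iff $Li+l<Li'+l'$, i.e.\ the order of log indices. We prove the lemma by strong induction along this order, for positions $i\ge w-1$; for $i\le w-2$ the claim is the initialization itself. The invariant we carry is slightly stronger than the lemma: at the start of iteration $l$ of pass $i$ the residual $h_i^{(l)}$ is the embedding of $(q_{Li+l},\sigma_{Li+l})$ with zero scratch registers. The value entry $V(i,l)=W_v h_i^{(l)}$ projects the symbol blocks of that residual, so it encodes $\sigma_{Li+l}$.

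\textbf{Base case.} At $(i,l)=(w-1,0)$ the current token is $(q_{L(w-1)},\sigma_{L(w-1)})$ by initialization, and $h^{(0)}=\mathrm{emb}$ of it.

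\textbf{Inductive step, part 1 (attention selects the right front).} Suppose the residual at $(i,l)$ is correct. Head $r$ at iteration $l$ reads only iteration-$l$ cache entries (reading rule) and, by Section~\ref{sec:positional}, hardmax selects position $j=i-w_r+1$. Here I need $w_r$ defined by $\ell_r=L(w_r-1)+1$, which is possible because $\ell_r\equiv1\pmod L$. The key computation is then
\[
Lj+l = Li+l-L(w_r-1)=(Li+l)-\ell_r+1,
\]
which is exactly the front index of queue $r$ at step $m=Li+l$ in Definition~\ref{def:log}. Since $Lj+l<Li+l$, the induction hypothesis (or the initialization if $j\le w-2$) gives $V(j,l)=\sigma_{Lj+l}$, so block $r$ of it is $\sigma^{(r)}_{m-\ell_r+1}$. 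The index $j\ge0$ holds because $i\ge w-1\ge w_r-1$. Note that this step uses the same iteration $l$ on both sides, which is precisely where the per-iteration alignment enters.

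\textbf{Inductive step, part 2 (the transition).} $W_O$ writes these $K$ fronts into the scratch blocks. The FF, as specified in Appendix~\ref{app:components}, computes $\delta(q_m,\text{fronts})=(q_{m+1},\sigma_{m+1})$, overwrites the state and symbol blocks, and clears the scratch. Here $m\ge s_{\max}-1=L(w-1)$, so the recurrence branch of Definition~\ref{def:log} applies, and the absorbing convention covers halting states. The residual $h_i^{(l+1)}$ is therefore correct for log index $Li+l+1$.

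\textbf{Transition to the next pass.} If $l+1<L$, we continue with iteration $l+1$. If $l+1=L$, the output head emits $(q_{L(i+1)},\sigma_{L(i+1)})$, so $h_{i+1}^{(0)}$ is correct; this is the successor pair $(i+1,0)$ in the order.

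\textbf{Main obstacle.} The delicate part is the bookkeeping in part 1. It requires checking three things: that the offsets $w_r-1$ are integers via the congruence; that uniqueness of the hardmax maximizer holds within the window $w\ge w_r$; and that the referenced entry $(j,l)$ is strictly earlier, so the induction is well-founded. Before relying on the arithmetic, I would verify it on small cases, such as $L=1$, where it reduces to Li--Wang's offset $\ell_r$.
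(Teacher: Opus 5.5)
Your proof is correct and follows essentially the same route as the paper's: an induction through passes and iterations in which head $r$ reads the same-iteration entry at position $i-w_r+1$, the identity $\ell_r-1=L(w_r-1)$ turns its index $L(i-w_r+1)+l$ into the front index $Li+l-\ell_r+1$, the feed-forward network applies $\delta$, and the output head carries the new state into $V(i+1,0)$. Your strong induction on $Li+l$ and your zero-scratch residual invariant spell out what the paper leaves implicit. The strictness $j<i$ that you flag as an obstacle is exactly the condition $w_r\ge2$, which Lemma~\ref{lem:congruence} guarantees via $b\ge L+1$.
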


\begin{proof}
The initialization gives the claim for $0\le i\le w-2$ and all $l$, and the token at position $w-1$ supplies its iteration-$0$ entry $V(w-1,0)=\sigma_{L(w-1)}$.

Consider pass $i\ge w-1$ and iteration $l$, and suppose the residual entering iteration $l$ contains $(q_{Li+l},\sigma_{Li+l})$. Head $r$ attends to position $i-w_r+1$, whose iteration-$l$ entry was written in an earlier pass and is therefore the log entry
\[
V(i-w_r{+}1,\, l)^{(r)} \;=\; \sigma^{(r)}_{L(i-w_r+1)+l} \;=\; \sigma^{(r)}_{Li + l - \ell_r + 1},
\]
the front of queue $r$ for step $Li + l$, because $\ell_r - 1 = L(w_r - 1)$. The residual stream supplies $q_{Li+l}$, and the feed-forward network computes
\[
(q_{Li+l+1},\ \sigma_{Li+l+1}) \;=\; \delta\big(q_{Li+l},\ \sigma^{(0)}_{Li+l-\ell_0+1}, \dots, \sigma^{(K-1)}_{Li+l-\ell_{K-1}+1}\big).
\]
For $l<L-1$, this result is cached as $V(i,l+1)$, and the residual entering iteration $l+1$ contains $(q_{Li+l+1},\sigma_{Li+l+1})$; for $l=L-1$, the output head emits $(q_{L(i+1)},\sigma_{L(i+1)})$, whose embedding supplies $V(i+1,0)$. Every value entry is therefore the corresponding log entry, as claimed.
\end{proof}

Figure~\ref{fig:pass} illustrates one pass.

\begin{figure}[htbp]
\centering
\begin{tikzpicture}[
  font=\footnotesize,
  cbox/.style={draw, align=left, inner sep=2.5pt, text width=56mm},
  fbox/.style={draw, align=left, inner sep=2.5pt, text width=42mm},
  fdot/.style={draw, inner sep=2.5pt},
  arr/.style={-{Stealth}},
  darr/.style={-{Stealth}, dashed},
  gaplbl/.style={font=\scriptsize, align=center, inner sep=1.5pt}
]
\node[cbox] (TOK) {token at position $i$: $(q_{Li},\sigma_{Li})$, embedded as $V(i,0)=\sigma_{Li}$};
\node[cbox, below=2.5mm of TOK] (A0) {iteration $0$: $K$ heads read; head $r$ max at $j=i{-}w_r{+}1$};
\node[cbox, below=2.5mm of A0] (M0) {FFN: $\delta(q_{Li},\ \mathrm{fronts}\ \sigma^{(r)}_{Li-\ell_r+1})$ gives $(q_{Li+1},\sigma_{Li+1})$};
\node[cbox, below=2.5mm of M0] (DOTS) {$\dots$ one machine step per iteration: iteration $l$ writes $V(i,l{+}1)=\sigma_{Li+l+1}$ $\dots$};
\node[cbox, below=2.5mm of DOTS] (AL) {iteration $L{-}1$ on $V(i,L{-}1)=\sigma_{Li+L-1}$: head $r$ max at $j=i{-}w_r{+}1$};
\node[cbox, below=2.5mm of AL] (ML) {$\mathrm{FFN}$ gives $(q_{L(i+1)},\sigma_{L(i+1)})$};
\draw[arr] (TOK) -- (A0) -- (M0) -- (DOTS) -- (AL) -- (ML);
\node[cbox, below=4mm of ML] (EXIT) {output head emits token at position $i{+}1$: $(q_{L(i+1)},\sigma_{L(i+1)})$};
\draw[arr] (ML) -- (EXIT);
\node[fbox, anchor=east] (F0) at ($(A0.west)+(-33mm,0)$) {$V(i{-}w_r{+}1,0)$ block $r=\sigma^{(r)}_{L(i-w_r+1)}=\sigma^{(r)}_{Li-\ell_r+1}$};
\node[fbox, anchor=east] (FL) at ($(AL.west)+(-33mm,0)$) {$V(i{-}w_r{+}1,L{-}1)$ block $r=\sigma^{(r)}_{Li-\ell_r+L}$};
\node[fdot, anchor=center] (FDOTS) at ($(F0.south)!0.5!(FL.north)$) {$\dots$};
\draw (F0) -- (FDOTS) -- (FL);
\draw[darr] (F0.east) -- node[below, gaplbl, text width=40mm] {gap $\Delta$: head $r$ unique maximum at distance $w_r-1$} (A0.west);
\draw[darr] (FL.east) -- node[above, gaplbl] {gap $\Delta$} (AL.west);
\node[anchor=south west, font=\footnotesize\itshape, inner sep=1pt] (CURLBL) at ($(TOK.north west)+(-1.5mm,1mm)$) {position $i$: residual stream during pass $i$};
\draw[rounded corners=3pt] ($(CURLBL.north west)+(-1mm,1mm)$) rectangle ($(ML.south east)+(2mm,-1.5mm)$);
\node[anchor=south west, font=\footnotesize\itshape, inner sep=1pt, text width=44mm, align=left] (FRONTLBL) at ($(F0.north west)+(-1.5mm,1mm)$) {position $i{-}w_r{+}1$ (head $r$): frozen $V$ entries (written during pass $i{-}w_r{+}1$)};
\draw[rounded corners=3pt] ($(FRONTLBL.north west)+(-1mm,1mm)$) rectangle ($(FL.south east)+(2mm,-1.5mm)$);
\end{tikzpicture}
\caption{One pass at position $i$. At iteration $l$, head $r$ retrieves queue $r$'s front from $V(i-w_r+1,l)$, and the feed-forward network advances the machine by one step. The final write is emitted as the token at position $i+1$.}
\label{fig:pass}
\end{figure}

\subsection{The main theorem}\label{sec:main-theorem}

\begin{proposition}[queue-level construction]\label{prop:queue}
Let $M=(\Sigma,Q,\delta)$ be a synchronous $K$-queue machine over a fixed alphabet $\Sigma$, with absorbing halting states and lengths $\ell_r=L(w_r-1)+1$. Under the initialization of Section~\ref{sec:initialization}, the construction of Sections~\ref{sec:tokens}--\ref{sec:initialization} gives a decoder-only looped transformer with one shared block, $K$ heads, window $w=\max_r w_r$, embedding dimension $2|\Sigma|K+|Q|+1$, and constant bit-size. It emits the execution log at stride $L$ and therefore uses
\[
T_{\mathrm{cot}}=\left\lceil\frac{\tau(n)}{L}\right\rceil
\]
generated tokens, including at most $L-1$ absorbing transitions in the final pass.
\end{proposition}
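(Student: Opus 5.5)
The proof assembles the pieces already built and then does the token count. There are three parts: check the architecture, check correctness via Lemma~\ref{lem:invariant}, and count the generated tokens.

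\textbf{Architecture.} The model is one loop unit with $p=1$ decoder layer and $K$ heads, reused at every iteration.
\begin{itemize}
\item The relative positional encoding is $\mathrm{pos}(i-j)=\Delta\sum_{r:\,i-j=w_r-1}e_r$ with $d_k=K$.
\item Head $r$ has constant query $e_r$ and zero key.
\item Value $W_v^{(r)}$ projects symbol block $r$, and $W_O$ writes it into scratch block $r$.
\item The feed-forward network is a finite lookup table for $\delta$ on $Q\times\Sigma^K$. It must also clear the scratch registers and overwrite the state and symbol blocks, so that the residual again has the layout of Section~\ref{sec:tokens}.
\end{itemize}
Since $\Sigma$, $Q$ and $K$ are fixed, this lookup is realized by a constant-size ReLU network on one-hot inputs (Appendix~\ref{app:components}). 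Every weight and intermediate scalar lies in a fixed finite set such as $\{0,\pm1,\Delta\}$. Nothing depends on $n$, $L$ or the generation length, so the model has constant bit-size and dimension $d=2|\Sigma|K+|Q|+1$.

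\textbf{Window.} Each head attends only at offset $w_r-1\le w-1$, so window $w$ suffices. Hardmax selects $j^*=i-w_r+1$ uniquely once $i\ge w-1$, which holds because generation starts at pass $w-1$.

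\textbf{Correctness.} I invoke Lemma~\ref{lem:invariant}, by induction on pairs $(i,l)$ in lexicographic order.
\begin{itemize}
\item The inductive step needs the attended entry $V(i-w_r+1,l)$ to have already been written. This holds since $w_r\ge2$ for every queue of positive length that is read, or the degenerate case $\ell_r=1$ is handled by the residual itself.
\item The identity $\ell_r-1=L(w_r-1)$ converts the position offset into log offset $\ell_r-1$. So the retrieved symbol is exactly the queue-$r$ front at step $Li+l$ in Definition~\ref{def:log}.
\item Hence the token emitted after pass $i$ is $(q_{L(i+1)},\sigma_{L(i+1)})$. The model emits the log at stride $L$.
\end{itemize}

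\textbf{Halting and token count.} Let $m^*$ be the first halting index. Transitions start at $s_{\max}-1=L(w-1)$, and pass $w-1+c$ covers log indices up to $L(w+c)$.

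The output head checks the state after iteration $L-1$. It first sees a halting state after the smallest $c$ with $L(c+1)\ge m^*-(s_{\max}-1)=\tau(n)$, that is, after $c+1=\lceil\tau(n)/L\rceil$ passes. The $\max(1,\cdot)$ in the definition of $\tau(n)$ covers the case where halting has already occurred.

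Because halting states are absorbing, the state stays halting for the remaining iterations of that pass. At most $L-1$ of these extra transitions occur, and they do not change the verdict. The emitted ACCEPT/REJECT token is therefore correct, and $T_{\mathrm{cot}}=\lceil\tau(n)/L\rceil$.

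The step I expect to need the most care is this off-by-one bookkeeping in the token count, namely whether the final ACCEPT/REJECT token is counted among the generated tokens. The architectural and correctness parts are routine given Lemma~\ref{lem:invariant}.
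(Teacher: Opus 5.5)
Your proposal is correct and follows the same route as the paper: it reads the architecture off Sections~\ref{sec:tokens}--\ref{sec:positional}, gets correctness from Lemma~\ref{lem:invariant}, and counts tokens from the stride-$L$ emission. You make that count more explicit than the paper by taking the smallest $c$ with $L(c+1)\ge\tau(n)$ and bounding the final pass's absorbing transitions by $L-1$. One small imprecision: the weights are not literally in $\{0,\pm1,\Delta\}$ (the lookup uses the factor $2$ and bias $-(K+\tfrac12)$, and the output head uses $K+2$), but they still lie in a fixed machine-dependent finite set, so constant bit-size is unaffected.
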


\begin{proof}
The decoder block, $K$ heads, window $w$, dimension $2|\Sigma|K+|Q|+1$, and constant bit-size are those of Sections~\ref{sec:tokens}--\ref{sec:positional}. By Lemma~\ref{lem:invariant}, at every pass $i\ge w-1$ and every iteration $l$ the cached values are the log entries $\sigma_{Li+l}$, so head $r$ retrieves the front of queue $r$ at machine step $Li+l$ and the feed-forward network applies $\delta$. After $L$ iterations the residual is therefore the embedding of $(q_{L(i+1)},\sigma_{L(i+1)})$. Appendix~\ref{app:components} (\ref{appsub:output}) emits this pair as the token at position $i+1$, or ACCEPT/REJECT if the state is halting. Generation thus starts from the initialization of Section~\ref{sec:initialization} at index $L(w-1)=s_{\max}-1$ and follows the log at stride $L$ for the remaining $\tau(n)$ steps, giving $T_{\mathrm{cot}}=\lceil\tau(n)/L\rceil$ with at most $L-1$ absorbing transitions in the final pass.
\end{proof}

\begin{theorem}\label{thm:main}
Any $(t(n),s(n))$ time-space bounded $k'$-tape Turing machine can be simulated by a constant-bit-size looped transformer (with its input encoded in the initial KV cache as specified in Section~\ref{sec:initialization}) with head--layer product $K=8kk'$ and context window length $O(s(n)/L)$, that takes $O(s(n)^{1/k}/L)$ CoT steps for each simulated TM step, leading to a total CoT length of $O(t(n)\cdot s(n)^{1/k}/L)$, for every loop count $L=L(n)$ and every fixed integer $k\ge1$ with $s(n)^{1/k}=\Omega(L)$.
\end{theorem}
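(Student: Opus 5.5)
The proof composes Fact~\ref{fact:queue} with Proposition~\ref{prop:queue} and then does the resource bookkeeping. First, apply Fact~\ref{fact:queue} with the fixed $k$ to convert the $k'$-tape machine into a synchronous queue machine. It runs in $O(t(n)s(n)^{1/k})$ steps and has $O(k's(n))$ total queue space. At level $g$ it has two content-half queues of length about $B^g$ and one buffer of length $2B^g$, where $B=\lceil s(n)^{1/k}\rceil$.

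Proposition~\ref{prop:queue} requires every length to satisfy $\ell_r\equiv1\pmod L$. So I will modify the lengths as in Appendix~\ref{app:congruence}. The plan is to replace $B$ by a congruent base $b=\Theta(s(n)^{1/k})$ such that $b^g\equiv 1\pmod L$ for every level; for instance, take $b\equiv1\pmod L$, which is possible because $s(n)^{1/k}=\Omega(L)$ leaves room to round $B$ up to the next value $\equiv 1 \pmod L$ while changing it by only a constant factor. The buffer of length $2b^g\equiv 2$ is then split into two queues of lengths congruent to $1$. This split is what raises the head count from $6kk'$ to $8kk'$: per stack level there are now $2+2=4$ queues, and with $2k$ stack-levels per tape this gives $K=8kk'$. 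The buffer split also needs to preserve the machine's semantics, which I would argue by treating the two halves as a delay line whose concatenation behaves like the original buffer, with the finite control routing symbols accordingly.

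The input queue must also hold the input, with $\ell_0\ge n$. This holds because $s(n)\ge n$ and the top level has length $\Theta(s(n))$. I would also check that rounding to the congruent base changes the cost bounds of \citet{liwang2025efficient} only by constant factors, since those bounds are expressed in terms of the level lengths.

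Second, apply Proposition~\ref{prop:queue} with $w_r=(\ell_r-1)/L+1$. The window is then $w=\max_r w_r=O(s_{\max}/L)=O(s(n)/L)$. The embedding dimension is $\Theta(K)$, the parameters are independent of $n$ and $L$, and the bit-size is constant. The number of generated tokens is $\lceil\tau(n)/L\rceil$, where $\tau(n)\le O(t(n)s(n)^{1/k})$ because it counts only the queue-machine transitions after the padding. Hence the total CoT length is $O(t(n)s(n)^{1/k}/L+1)$. The additive $1$ is absorbed because $s(n)^{1/k}=\Omega(L)$ and $t(n)\ge1$ together give $t(n)s(n)^{1/k}/L=\Omega(1)$. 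Dividing by the $t(n)$ simulated steps gives $O(s(n)^{1/k}/L)$ CoT steps per TM step, which is $\Omega(1)$ by the same hypothesis. Setting $L=1$ reduces the congruence condition to a vacuous one and recovers \citep[Thm~1]{liwang2025efficient}, up to the $8$ versus $6$ in the head count.

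Third, correctness: the halting verdict is read from the emitted ACCEPT/REJECT. Because halting states are absorbing, the extra at most $L-1$ transitions in the final pass do not change the verdict. The input is delivered through the initialization interface with $V(i,l)=\sigma_{Li+l}$, which by Definition~\ref{def:log} is the input feed followed by blanks.

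The main obstacle is the congruence step. The difficulty is making all queue lengths $\equiv1\pmod L$ while preserving both the divisibility structure between levels and the $O(s(n)^{1/k})$ amortized cost of the hierarchy. The natural choice $b\equiv 1\pmod L$ handles the content halves. The buffer of length $2b^g$ is what forces the split into two queues and the constant-factor increase in $K$, and I expect its semantic correctness to need the most care.
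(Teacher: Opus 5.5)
Your proposal is correct and follows the paper's proof essentially step for step. Rounding to a base $b\equiv1\pmod L$ with $b=\Theta(s(n)^{1/k})$, and splitting each length-$2b^g$ buffer into two length-$b^g$ queues (hence $K=8kk'$), is exactly Lemma~\ref{lem:congruence}; Proposition~\ref{prop:queue} then gives window $w=O(s(n)/L)$ and $\lceil\tau(n)/L\rceil=O(t(n)s(n)^{1/k}/L)$ tokens, and your explicit absorption of the additive $1$ and the check $\ell_0\ge n$ are slightly more careful than the paper, while the paper's extra condition $b\ge L+1$ (ensuring $w_r\ge2$) holds automatically when you round $B\ge2$ up to the next value $\equiv1\pmod L$.
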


\begin{proof}
For the prescribed $L$ and the given fixed $k$ satisfying $s(n)^{1/k}=\Omega(L)$, the congruent lengths supplied by Appendix~\ref{app:congruence} (Lemma~\ref{lem:congruence}) give a step-equivalent $K=8kk'$-queue simulation with run length $\tau(n)=O(t(n)s(n)^{1/k})$ and window $w=O(s(n)/L)$. Proposition~\ref{prop:queue} yields
\[
T_{\mathrm{cot}}=O\!\left(\frac{t(n)s(n)^{1/k}}{L}\right).
\]
Multiplying by $L$ gives the serial-work bound. The cache stores $w$ positions at each of $L$ iterations, so total KV state is $\Theta(Lw)=\Theta(s(n))$.
\end{proof}

\paragraph{Resource accounting.} The factor-$L$ reduction applies to visible resources only: generated tokens and the visible window shrink by $L$, while the total cached KV state $\Theta(Lw)=\Theta(s(n))$ and the serial block work $L\cdot T_{\mathrm{cot}}=O(t(n)s(n)^{1/k})$ are unchanged from $L=1$. Table~\ref{tab:resources} tabulates both regimes. The $K$ heads of the single shared block are re-applied at each iteration, so the head--layer product does not grow with $L$.

\begin{table}[htbp]
\centering
\begin{tabular}{@{}lll@{}}
\toprule
Quantity & $L=1$ (endpoint) & admissible $L=O(s(n)^{1/k})$ \\
\midrule
Generated tokens & $O(t(n)s(n)^{1/k})$ & $O\!\big(t(n)s(n)^{1/k}/L\big)$, exactly $\lceil\tau(n)/L\rceil$ \\
Visible positions (window) & $O(s(n))$ & $O\!\big(s(n)/L\big)$ \\
Loop executions (serial block work) & $O(t(n)s(n)^{1/k})$ & $O(t(n)s(n)^{1/k})$, unchanged \\
Total cached KV state & $\Theta(s(n))$ & $\Theta(Lw)=\Theta(s(n))$, unchanged \\
Learned model size & $O(1)$, machine-dependent & $O(1)$, machine-dependent \\
\bottomrule
\end{tabular}
\caption{The construction at the endpoint $L=1$ (which recovers the bounds of \citealp{liwang2025efficient}) and at a general admissible loop count. The simulated Turing machine is the same, while the queue lengths are re-instantiated for each $L$; only the first two rows shrink with $L$.}
\label{tab:resources}
\end{table}

A concrete instance: the balanced point $s(n)=m^2$, $k=2$, $L=\Theta(m)$ is admissible because $s(n)^{1/k}=m=\Omega(L)$, and it gives $O(t(n))$ generated tokens, a window of $O(m)$ positions, $\Theta(m^2)$ cached entries, and $O(t(n)m)$ loop executions.

\subsection{A worked example}\label{sec:example}

Take one queue of length $\ell=7$ and loop count $L=3$. The window is then $w=3$: at pass $i$ and iteration $l$, the head reads position $i-2$, at relative distance $w-1=2$, in the same iteration. Since $\ell-1=6=L(w-1)$, the entry at position $i-2$ and iteration $l$ is the front consumed at step $3i+l$. The construction keeps a separate copy of this bookkeeping for every queue. One queue is enough to see the alignment.

Before generation, the loaded cache of Section~\ref{sec:initialization} must already hold $V(i,l)=\sigma_{Li+l}$ for $0\le i\le w-2=1$ and $l\in\{0,1,2\}$: $\sigma_0,\sigma_1,\sigma_2$ at position $0$ and $\sigma_3,\sigma_4,\sigma_5$ at position $1$, together with the token $(q_6,\sigma_6)$ at position $2$. Here $L(w-1)=6=s_{\max}-1$, and $s_{\max}=7$ is the length of this queue, so generation begins at step $6$, the first step whose front is already in the cache.

Table~\ref{tab:passes} lists the first passes. At pass $i=3$ the three iterations execute steps $9$, $10$, and $11$, and iteration $l$ reads $V(1,l)=\sigma_{3+l}$. On a queue of length $7$, the symbol consumed at step $m$ was appended at step $m-7+1$, so those fronts are the symbols appended at steps $3$, $4$, and $5$. In general, the symbol read at pass $i$ and iteration $l$ has log index
\[
3(i-2)+l \;=\; 3i+l-7+1 .
\]
Each pass advances the machine from log index $3i$ to $3i+3$ and emits the state at the boundary. The visible tokens $(q_6,\sigma_6),(q_9,\sigma_9),(q_{12},\sigma_{12}),\dots$ are the log sampled at stride $L=3$. Intermediate steps, such as steps $7$ and $8$ inside pass $2$, are never emitted.

\begin{table}[htbp]
\centering
\begin{tabular}{@{}cccccc@{}}
\toprule
Pass $i$ & Window & Steps executed & Reads ($l=0,1,2$) & Residual advances from & Emits \\
\midrule
2 & $\{0,1,2\}$ & 6, 7, 8 & $\sigma_0,\sigma_1,\sigma_2$ (at pos.\ 0) & 6, 7, 8 & $(q_9,\sigma_9)$ \\
3 & $\{1,2,3\}$ & 9, 10, 11 & $\sigma_3,\sigma_4,\sigma_5$ (at pos.\ 1) & 9, 10, 11 & $(q_{12},\sigma_{12})$ \\
4 & $\{2,3,4\}$ & 12, 13, 14 & $\sigma_6,\sigma_7,\sigma_8$ (at pos.\ 2) & 12, 13, 14 & $(q_{15},\sigma_{15})$ \\
\bottomrule
\end{tabular}
\caption{One queue with $L=3$, $\ell=7$, and $w=3$. At pass $i$ and iteration $l$, the head reads position $i-2$ in iteration $l$, which is log index $3(i-2)+l=3i+l-7+1$. The pass advances the machine by three steps and emits every third log state.}
\label{tab:passes}
\end{table}

Three features of the example are what the construction uses. First, iteration $1$ of pass $3$ reads $V(1,1)=\sigma_4$: the same iteration of an earlier position. It does not read an iteration-$0$ entry, and it does not mix iterations. Section~\ref{sec:model} states this per-iteration rule. Second, a value entry is frozen once written. Initialization supplies $V(1,\cdot)$; pass $2$ writes $V(2,\cdot)$, and pass $4$ reads that same entry. Third, the window has to be the one the congruence gives. The front required at step $11$ is $\sigma_5$, written at position $1$, exactly $w-1=2$ positions back. With a window of $2$, position $1$ would already lie outside the positions pass $3$ can read. We call a representation \emph{layer-aligned} when, at every simulated step, the inputs that step consumes already sit in the KV cache at the iteration, or layer, that executes the step.

\section{Softmax simulates hardmax at constant bit-size}\label{sec:softmax}

Section~\ref{sec:hardmax} uses hardmax attention. Theorem~\ref{thm:main} also holds for a max-first binary32 softmax kernel, with no growth in precision as the window or the loop count grows. The five-clause kernel contract, the underflow arithmetic, and the complete proof are in Appendix~\ref{app:softmax}.

Write $a_{i,j}$ for the attention score of candidate position $j$ in the row of query position $i$. For a row with unique maximizer $j^*$, define the attention score gap
\[
\Delta \;=\; a_{i,j^*} \;-\; \max_{j \ne j^*} a_{i,j},
\]
the score separation of \citep[Def.~D.1]{brosamle2026expressive}.

\paragraph{The kernel, in words.} The kernel evaluates the exponential only on inputs shifted by the row maximum, using a fixed binary32 exponential that returns exactly $1$ at $0$ and $+0$ at every input at most $-256$; numerator and denominator are accumulated in binary32 and divided once; the quotient is cast once at the output boundary, to a fixed format that contains $0$ and $1$ exactly. The role of the threshold is arithmetic: since $256 > 150 \ln 2$, we have $e^{-256} < 2^{-150}$, the round-to-nearest midpoint between $0$ and the least positive binary32 subnormal, so a correctly rounded binary32 exponential returns $+0$ on every input at most $-256$ (Appendix~\ref{app:softmax}).

\begin{lemma}[softmax implements unique hardmax exactly]\label{lem:softmax}
Let $(a_j)_j$ be a score row with unique maximizer $j^*$ and score gap at least $256$. Then the max-first kernel's computed attention output is exactly $v_{j^*}$, independently of the row length, provided $v_{j^*}$ is exactly representable in the accumulation format and preserved by the output cast---in particular for the values in $\{0,1\}$ used in this construction.
\end{lemma}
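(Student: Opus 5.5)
I would follow the kernel computation step by step, tracking exact binary32 values, and use the stated properties of the kernel: max-first shift, $\exp(0)=1$ exactly, $\exp(x)=+0$ for $x\le -256$, binary32 accumulation, a single division, and a single output cast.

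First, the row maximum is computed exactly. Taking a maximum of representable scores involves no rounding, so the shift subtracts $a_{j^*}$ from every entry. The shifted value at $j^*$ is then exactly $0$. For $j\neq j^*$, the shifted value is $a_j-a_{j^*}\le-\Delta\le-256$.

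One point needs care here: the subtraction $a_j-a_{j^*}$ is itself rounded in binary32. Rounding is monotone, and $-256$ is representable. Hence the rounded difference is still at most $-256$, unless it overflows to $-\infty$, in which case the exponential still returns $+0$. By the threshold clause, every $j\ne j^*$ therefore contributes the weight $+0$, and $j^*$ contributes the weight exactly $1$.

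Next come the two accumulations.

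\emph{Denominator.} It is $1$ plus a sum of exact zeros, in any order. Adding $+0$ to a finite binary32 number is exact, so the denominator is exactly $1$, whatever the row length.

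\emph{Numerator.} For each output coordinate it is $\sum_j w_j v_j$. Every product $0\cdot v_j$ is $\pm0$ for finite $v_j$; the value entries are one-hot in $\{0,1\}$, so they are finite. The product $1\cdot v_{j^*}$ equals $v_{j^*}$ exactly, since $v_{j^*}$ is representable in the accumulation format. Summing signed zeros into $v_{j^*}$ is exact. A signed-zero subtlety arises only when $v_{j^*}=0$, where the result could be $-0$; I would note that $-0=+0$ as a value, or that the products $+0\cdot v_j$ with $v_j\ge0$ are all $+0$.

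The quotient $v_{j^*}/1$ is then exact. The output cast preserves $v_{j^*}$ by hypothesis, and for $\{0,1\}$ this holds because the target format contains $0$ and $1$ exactly. Nothing in the argument depended on the number of entries, which gives independence from the row length.

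The main obstacle is making sure no rounding step breaks the exact zeros or the exact $1$. The delicate spots are:
\begin{itemize}
\item the rounded shift staying at or below $-256$ (monotonicity of rounding);
\item the claim that a correctly rounded exponential returns $+0$, which relies on $e^{-256}<2^{-150}$, the round-to-nearest midpoint, as stated before the lemma.
\end{itemize}
I would cite the kernel contract of Appendix~\ref{app:softmax} for these clauses rather than re-derive them.

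Finally, to connect the lemma to the construction, I would observe that the scores of Section~\ref{sec:positional} take only the values $\Delta$ and $0$, with a unique maximizer. Choosing $\Delta=256$, which is a constant, therefore meets the gap hypothesis without raising the bit-size.
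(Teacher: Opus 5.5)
Your proposal is correct and follows the same route as the paper's proof: the shift gives $0$ at $j^*$ and at most $-256$ elsewhere, the exponential row is one-hot, the numerator and denominator are exactly $v_{j^*}$ and $1$, and division and the single cast preserve the result. Your extra checks (monotone rounding of the shifted inputs, finiteness of the loser values so that $0\cdot v_j=\pm 0$, and the signed-zero case) spell out details that the paper's one-paragraph argument leaves implicit.
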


\begin{proof}
By the kernel, the shifted inputs are $\hat a_{j^*} = 0$ and $\hat a_j \le -256$ for every $j \ne j^*$. The exponential row is exactly the one-hot indicator of $j^*$: $E_{32}(\hat a_{j^*}) = 1$ and $E_{32}(\hat a_j) = +0$ for every $j \ne j^*$. Numerator and denominator are therefore the exact sums $v_{j^*}$ and $1$, each partial sum containing only zeros and one nonzero term. Division and the single boundary cast preserve the result.
\end{proof}

\begin{corollary}[the main results hold under softmax]\label{cor:softmax}
Set $\Delta = 256$ in Section~\ref{sec:positional}. Then every score row has scores in $\{256, 0\}$, a unique maximizer at distance $w_r - 1$, and shifted losers exactly $-256$, all exactly representable in binary32. The value vectors are one-hot, and the remaining parameters and feed-forward computations use a fixed machine-dependent format as specified in Appendix~\ref{app:components}. By Lemma~\ref{lem:softmax} every attention output equals the hardmax one-hot of Section~\ref{sec:positional} bit for bit, so Theorem~\ref{thm:main} and Proposition~\ref{prop:queue} hold with softmax attention under the max-first kernel: the parameters are those of the hardmax construction rescaled by the fixed ratio between its score gap and the kernel constant $256$ (here $1$, since $\Delta=256$), the resources are unchanged, and the bit-size is still $O(1)$, independent of $n$ and $L$.
\end{corollary}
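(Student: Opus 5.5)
The corollary follows once every attention row in the construction of Section~\ref{sec:hardmax} is shown to satisfy the hypotheses of Lemma~\ref{lem:softmax}. After that, the hardmax proofs of Lemma~\ref{lem:invariant}, Proposition~\ref{prop:queue} and Theorem~\ref{thm:main} carry over verbatim, because they use attention only through its output vector.

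\emph{Step 1: score rows.} Fix a head $r$ and a query position $i\ge w-1$, and set $\Delta=256$. The query is $e_r$, the key is $0$, and $\mathrm{pos}(i-j)$ has $r$-th coordinate $\Delta[i-j=w_r-1]$. The score is therefore $256$ at $j^*=i-w_r+1$ and exactly $0$ at every other $j$ in the window $\{i-w+1,\dots,i\}$.

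Two cases need care.
\begin{itemize}
\item The maximizer must lie inside the window. It does, because $w_r\le w$.
\item Other heads may share the same offset, since equal-length queues give equal $w_r$. This does not matter: the query $e_r$ only reads coordinate $r$.
\end{itemize}
After the max-first shift, the row is $0$ at $j^*$ and exactly $-256$ elsewhere. The gap is exactly $256$, and all of these values are binary32-representable.

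\emph{Step 2: values.} Head $r$'s value $W_v^{(r)}h_j^{(l)}$ is the one-hot block $\mathrm{onehot}(\sigma^{(r)})$. Its entries lie in $\{0,1\}$, so they are exactly representable and survive the output cast. Lemma~\ref{lem:softmax} then gives the attention output as exactly $v_{j^*}$ coordinatewise, for any row length. The outputs of $W_O$, and hence the inputs to the feed-forward network, coincide bit for bit with the hardmax run.

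\emph{Step 3: induction.} Run an induction over (pass, iteration) in the order used in Lemma~\ref{lem:invariant}. If all earlier cached entries agree with those of the hardmax run, the current attention outputs agree by Step 2. The feed-forward network and the output head are unchanged and deterministic in their fixed format, so the next residual and the next cached entry agree as well. Hence the whole transcript, the emitted tokens, and the resource counts (heads, window, $T_{\mathrm{cot}}$) are identical.

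\emph{Step 4: bit-size.} The only changed parameter is $\Delta=256$, a constant. The kernel uses binary32 and a fixed output format, neither of which depends on $n$ or $L$, so the bit-size stays $O(1)$. The rescaling remark in the statement is trivial here, since the ratio is $1$.

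The main obstacle is Step 1's exactness claim. The losers must be \emph{exactly} $-256$ after the shift, not merely at most $-256$ up to rounding. They also must not be pushed upward by other heads or positional coordinates. To settle this, I would check in Appendix~\ref{app:components} that keys are literally zero and that the positional vector contributes only through coordinate $r$. Then each score is a single product $1\cdot\Delta$ or $0$, and the subtraction $0-256$ is exact in binary32.
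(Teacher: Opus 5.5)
Your proposal is correct and follows the same argument the paper gives inline in the corollary: zero keys and query $e_r$ give score rows in $\{256,0\}$, with unique maximizer $j^*=i-w_r+1$ and shifted losers exactly $-256$; the values are one-hot in $\{0,1\}$; and Lemma~\ref{lem:softmax} then reproduces the hardmax attention outputs bit for bit, so Lemma~\ref{lem:invariant}, Proposition~\ref{prop:queue}, and Theorem~\ref{thm:main} transfer with unchanged resources and $O(1)$ bit-size. Your explicit induction over (pass, iteration) spells out what the paper compresses into ``so Theorem~\ref{thm:main} and Proposition~\ref{prop:queue} hold.'' Also note that Lemma~\ref{lem:softmax} only needs shifted losers at most $-256$, so their exactness is something the statement asserts rather than something the argument depends on.
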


\begin{remark}[scope]\label{rem:softmax-scope}
The result is conditional on the stated kernel, in three ways. First, it requires a binary32 exponential that returns $+0$ below a fixed threshold, without a positive floor, logit soft-capping, or an additive $\varepsilon$ in the normalizer; the exact contract is Appendix~\ref{app:softmax}. Second, the arithmetic is fixed-format: values not exactly representable in it (arbitrary real value vectors, for instance) are outside the exactness statement. Third, no claim is made about kernels beyond the contract: subtract-max is standard in production softmax implementations (Appendix~\ref{app:softmax} records examples), but that observation is context, not a conformance proof, and in particular online and fused attention kernels with their own epilogues are not thereby covered. Appendix~\ref{app:softmax} also treats two variants: a kernel that does not subtract the maximum (then Section~\ref{sec:positional} can set the winner's score to $0$ and every other score to $-256$), and a kernel with a positive scale, canonically $1/\sqrt{d_k}$ (then instantiate $\Delta=\lceil 256\sqrt{d_k}\rceil$; here $d_k=K$ is a machine constant, so $\Delta$ remains $O(1)$). Nothing generalizes to exact-real softmax.
\end{remark}

The comparison with exact-real arithmetic is a comparison of assumptions. \citet{brosamle2026expressive} simulate Turing machines with low-precision softmax and summarized CoT, separating the winner from the other scores with depth, width, and precision that grow logarithmically. The present construction fixes a constant score gap and lets finite-format underflow make the selection exact; the only added cost is that constant gap $\Delta$. These are different precision and architecture assumptions. We do not claim that their arithmetic implements exact hard selection.

\section{Related work}\label{sec:related}

We keep this section argument-driven; Appendix~\ref{app:comparison} carries the detailed comparison table of transformers simulating $(t,s)$-bounded Turing machines, with per-row notes and normalized notation.

\paragraph{Transformer expressivity.} RASP models transformers as readable programs \citep{weiss2021thinking, lindner2023tracr}. Variants such as B-RASP, C-RASP, and RASP-L, together with subsequent formal-language analyses, characterize the languages recognized under different architectural restrictions \citep{angluin2024masked, yang2024counting, zhou2024algorithms, strobl2024formal, strobl2025transducers, huang2025formal}; see \citet{strobl2024formal} for a survey. On the circuit side, \citet{merrill2023parallelism} show that constant-depth, log-precision transformers without CoT are limited to $\mathsf{TC}^0$-style shallow circuits. \citet{feng2023towards} connect CoT length to circuit depth: polynomially long CoT lifts transformers from shallow-circuit problems to all of $\mathsf{P}$.

\paragraph{Looped transformers and latent thought.} Looped transformers extend serial computation by re-applying a shared block \citep{giannou2023looped, yang2024looped, geiping2025scaling}; \citet{merrill2025little} show that logarithmically deep looped transformers recognize regular languages and solve graph connectivity without CoT. \citet{saunshi2025reasoning} study how latent thought can simulate computations expressed through CoT. \citet{xusato2026formal} establish simulation results and complexity-class characterizations for CoT and latent thought, together with conditional separations: latent iterations support more efficient parallel computation in their deterministic setting, while stochastic CoT has advantages for approximate counting and sampling under their stated assumptions. Their deterministic comparison allows the embedding dimension to grow with input length. Our result is complementary: for a fixed Turing machine and hierarchy parameter, we construct a family with fixed machine-dependent learned size and precision in which varying $L$ exchanges explicit CoT steps for latent iterations. We account for generated tokens, visible context, serial loop work, and persistent per-iteration KV state under the loaded-cache interface. This is an interpolation for the constructed recurrent simulation, not an equivalence of arbitrary CoT and latent-thought models or a claim that all intermediate tokens can be eliminated.

\paragraph{Autoregressive Turing-machine simulations.} Autoregressive transformers are Turing-complete \citep{perez2021attention, bhattamishra2020computational, merrill2024expressive, li2024chain, qiu2025ask}, with precisions and dimensions ranging from unbounded to constant. \citet{jiang2025softmax} establish softmax Turing completeness via counter machines with full-history attention and no fixed numerical-precision bound. Two later lines of work bring the context down to the space bound. Context-management schemes erase or summarize intermediate CoT: PENCIL \citep{yang2025pencil} achieves maximal context $O(s(n))$, and \citet{cui2026position} show that the management policy itself determines the computable class; \citet{brosamle2026expressive} keep every segment at $O(s(n))$ with low-precision softmax, at logarithmically growing depth, width, and precision. Sliding-window attention uses locality in place of erasure: \citet{schuurmans2024autoregressive} gave the earliest bounded-window universality, with a circular-queue context; \citet{liwang2025constant} prove constant-bit-size Turing completeness for an $O(s(n))$ window via a Post-machine simulation and characterize $\mathsf{WINDOW}[s(n)] = \mathsf{SPACE}[s(n)]$; \citet{limao2026rethinking} remove the need for positional encodings; and \citet{liwang2025efficient} reach $O(s(n)^{1/k})$ CoT tokens per TM step, with window $s(n)+1$, by increasing the head--layer product. Our construction starts from that last simulation and folds $L$ of its transitions into each token by making the queue representation layer-aligned (Section~\ref{sec:example}). At $L=1$ it recovers that simulation's bounds.

\section{Discussion}\label{sec:discussion}

The following limitations bound the result.

\begin{itemize}
\item \emph{Loaded cache.} The input is pre-loaded into the initial KV cache (Section~\ref{sec:initialization}); the initialization cost is excluded from all bounds, and ordinary prompt prefill is not proved to produce the required entries.
\item \emph{Visible resources only.} The factor-$L$ reduction applies to generated tokens and to the visible window only; total cached KV state and serial block work are unchanged from $L=1$.
\item \emph{Kernel-conditional softmax.} The softmax result holds only under the max-first binary32 contract of Appendix~\ref{app:softmax}, for values exactly representable in the accumulation and output formats; it does not extend to exact-real softmax, to kernels with positive floors or logit soft-capping, or to production kernels beyond that contract.
\item \emph{Generated tokens are machine states.} The visible trace is a sequence of internal machine-state symbols at stride $L$, not a natural-language chain of thought; CoT-length statements concern this trace.
\item \emph{No learning claims.} Nothing here shows that such computations are learned by pretraining, reinforcement learning, or any other training signal; the messages of Section~\ref{sec:intro} are conjectural, and the supervision question for strided traces is open.
\item \emph{Bounds, not measurements.} All quantitative statements are asymptotic bounds of the construction; no experiment, speedup, or deployment claim is made.
\end{itemize}

\section*{AI use statement}
We used generative AI tools to improve the readability of the manuscript
and to identify relevant literature. We also used the Paper Assistant
Tool (PAT) to review the paper's technical content. After assessing its
feedback, we corrected errors it identified and cited relevant papers
it suggested. The authors reviewed the AI-assisted revisions and take
responsibility for the final text, citations, and mathematical claims.

\bibliographystyle{plainnat}
\bibliography{refs}

@article{perez2021attention,
  author  = {P{\'e}rez, Jorge and Barcel{\'o}, Pablo and Marinkovic, Javier},
  title   = {Attention is {Turing}-complete},
  journal = {Journal of Machine Learning Research},
  volume  = {22},
  number  = {75},
  pages   = {1--35},
  year    = {2021}
}

@inproceedings{bhattamishra2020computational,
  author    = {Bhattamishra, Satwik and Patel, Arkil and Goyal, Navin},
  title     = {On the computational power of transformers and its implications in sequence modeling},
  booktitle = {Proceedings of the 24th Conference on Computational Natural Language Learning (CoNLL)},
  pages     = {455--475},
  year      = {2020}
}

@article{merrill2023parallelism,
  author  = {Merrill, William and Sabharwal, Ashish},
  title   = {The parallelism tradeoff: Limitations of log-precision transformers},
  journal = {Transactions of the Association for Computational Linguistics},
  volume  = {11},
  pages   = {531--545},
  year    = {2023}
}

@inproceedings{feng2023towards,
  author    = {Feng, Guhao and Zhang, Bohang and Gu, Yuntian and Ye, Haotian and He, Da-Wei and Wang, Liwei},
  title     = {Towards revealing the mystery behind chain of thought: a theoretical perspective},
  booktitle = {Advances in Neural Information Processing Systems 36 (NeurIPS 2023)},
  pages     = {70757--70798},
  year      = {2023}
}

@article{post1936finite,
  author  = {Post, Emil L.},
  title   = {Finite combinatory processes --- formulation 1},
  journal = {Journal of Symbolic Logic},
  volume  = {1},
  number  = {3},
  pages   = {103--105},
  year    = {1936}
}

@article{strobl2024formal,
  author  = {Strobl, Lena and Merrill, William and Weiss, Gail and Chiang, David and Angluin, Dana},
  title   = {What formal languages can transformers express? {A} survey},
  journal = {Transactions of the Association for Computational Linguistics},
  volume  = {12},
  pages   = {543--561},
  year    = {2024}
}

@inproceedings{shaw2018self,
  author    = {Shaw, Peter and Uszkoreit, Jakob and Vaswani, Ashish},
  title     = {Self-attention with relative position representations},
  booktitle = {Proceedings of the 2018 Conference of the North American Chapter of the Association for Computational Linguistics: Human Language Technologies, Volume 2 (Short Papers)},
  pages     = {464--468},
  year      = {2018}
}

@inproceedings{giannou2023looped,
  author    = {Giannou, Angeliki and Rajput, Shashank and Sohn, Jy-yong and Lee, Kangwook and Lee, Jason D. and Papailiopoulos, Dimitris},
  title     = {Looped transformers as programmable computers},
  booktitle = {Proceedings of the 40th International Conference on Machine Learning (ICML 2023)},
  series    = {Proceedings of Machine Learning Research},
  volume    = {202},
  pages     = {11398--11442},
  year      = {2023}
}

@inproceedings{weiss2021thinking,
  author    = {Weiss, Gail and Goldberg, Yoav and Yahav, Eran},
  title     = {Thinking like transformers},
  booktitle = {Proceedings of the 38th International Conference on Machine Learning (ICML 2021)},
  series    = {Proceedings of Machine Learning Research},
  volume    = {139},
  pages     = {9979--9991},
  year      = {2021}
}

@inproceedings{wei2022chain,
  author    = {Wei, Jason and Wang, Xuezhi and Schuurmans, Dale and Bosma, Maarten and Xia, Fei and Chi, Ed H. and Le, Quoc V. and Zhou, Denny},
  title     = {Chain-of-Thought Prompting Elicits Reasoning in Large Language Models},
  booktitle = {Advances in Neural Information Processing Systems 35 (NeurIPS 2022)},
  pages     = {24824--24837},
  year      = {2022}
}

@inproceedings{dehghani2019universal,
  author    = {Dehghani, Mostafa and Gouws, Stephan and Vinyals, Oriol and Uszkoreit, Jakob and Kaiser, {\L}ukasz},
  title     = {Universal transformers},
  booktitle = {Proceedings of the 7th International Conference on Learning Representations (ICLR)},
  year      = {2019}
}

@inproceedings{lan2020albert,
  author    = {Lan, Zhenzhong and Chen, Mingda and Goodman, Sebastian and Gimpel, Kevin and Sharma, Piyush and Soricut, Radu},
  title     = {{ALBERT}: {A} lite {BERT} for self-supervised learning of language representations},
  booktitle = {Proceedings of the 8th International Conference on Learning Representations (ICLR)},
  year      = {2020}
}

@inproceedings{vaswani2017attention,
  author    = {Vaswani, Ashish and Shazeer, Noam and Parmar, Niki and Uszkoreit, Jakob and Jones, Llion and Gomez, Aidan N. and Kaiser, {\L}ukasz and Polosukhin, Illia},
  title     = {Attention is all you need},
  booktitle = {Advances in Neural Information Processing Systems 30 (NIPS 2017)},
  year      = {2017}
}

@inproceedings{li2024chain,
  author    = {Li, Zhiyuan and Liu, Hong and Zhou, Denny and Ma, Tengyu},
  title     = {Chain of thought empowers transformers to solve inherently serial problems},
  booktitle = {Proceedings of the 12th International Conference on Learning Representations (ICLR)},
  year      = {2024},
  note      = {arXiv:2402.12875}
}

@inproceedings{merrill2024expressive,
  author    = {Merrill, William and Sabharwal, Ashish},
  title     = {The expressive power of transformers with chain of thought},
  booktitle = {Proceedings of the 12th International Conference on Learning Representations (ICLR)},
  year      = {2024},
  note      = {arXiv:2310.07923}
}

@inproceedings{merrill2025little,
  author    = {Merrill, William and Sabharwal, Ashish},
  title     = {A little depth goes a long way: The expressive power of log-depth transformers},
  booktitle = {Advances in Neural Information Processing Systems 38 (NeurIPS 2025)},
  year      = {2025},
  note      = {arXiv:2503.03961}
}

@inproceedings{saunshi2025reasoning,
  author    = {Saunshi, Nikunj and Dikkala, Nishanth and Li, Zhiyuan and Kumar, Sanjiv and Reddi, Sashank J.},
  title     = {Reasoning with latent thoughts: On the power of looped transformers},
  booktitle = {Proceedings of the 13th International Conference on Learning Representations (ICLR)},
  year      = {2025},
  note      = {arXiv:2502.17416}
}

@inproceedings{xusato2026formal,
  author    = {Xu, Kevin and Sato, Issei},
  title     = {A formal comparison between chain of thought and latent thought},
  booktitle = {Proceedings of the 43rd International Conference on Machine Learning (ICML)},
  year      = {2026},
  note      = {arXiv:2509.25239, camera-ready version}
}

@inproceedings{qiu2025ask,
  author    = {Qiu, Ruizhong and Xu, Zhe and Bao, Wenxuan and Tong, Hanghang},
  title     = {Ask, and it shall be given: On the {Turing} completeness of prompting},
  booktitle = {Proceedings of the 13th International Conference on Learning Representations (ICLR)},
  year      = {2025},
  note      = {arXiv:2411.01992}
}

@inproceedings{liwang2025constant,
  author    = {Li, Qian and Wang, Yuyi},
  title     = {Constant bit-size transformers are {Turing} complete},
  booktitle = {Advances in Neural Information Processing Systems 38 (NeurIPS 2025)},
  pages     = {69254--69273},
  year      = {2025},
  doi       = {10.52202/085713-2083}
}

@misc{liwang2025efficient,
  author = {Li, Qian and Wang, Yuyi},
  title  = {Efficient {Turing} machine simulation with transformers},
  year   = {2025},
  note   = {arXiv:2512.00003}
}

@misc{limao2026rethinking,
  author = {Li, Qian and Mao, Xinyu and Teng, Shang-Hua},
  title  = {Rethinking the role of positional encoding: Sliding-window transformers without {PE} remain {Turing} complete},
  year   = {2026},
  note   = {arXiv:2606.01532}
}

@misc{schuurmans2024autoregressive,
  author = {Schuurmans, Dale and Dai, Hanjun and Zanini, Franco},
  title  = {Autoregressive large language models are computationally universal},
  year   = {2024},
  note   = {arXiv:2410.03170}
}

@inproceedings{yang2025pencil,
  author    = {Yang, Chenxiao and Srebro, Nathan and McAllester, David and Li, Zhiyuan},
  title     = {{PENCIL}: Long thoughts with short memory},
  booktitle = {Proceedings of the 42nd International Conference on Machine Learning (ICML 2025)},
  series    = {Proceedings of Machine Learning Research},
  volume    = {267},
  pages     = {71092--71156},
  year      = {2025},
  publisher = {PMLR}
}

@misc{brosamle2026expressive,
  author = {Br{\"o}samle, Moritz and Eckstein, Stephan},
  title  = {The expressive power of low precision softmax transformers with (summarized) chain-of-thought},
  year   = {2026},
  note   = {arXiv:2605.18079}
}

@inproceedings{jiang2025softmax,
  author    = {Jiang, Hongjian and Hahn, Michael and Zetzsche, Georg and Lin, Anthony Widjaja},
  title     = {Softmax transformers are {Turing}-complete},
  booktitle = {Proceedings of the 14th International Conference on Learning Representations (ICLR)},
  year      = {2026},
  note      = {arXiv:2511.20038}
}

@misc{cui2026position,
  author = {Cui, Guanyu and Wei, Zhewei and He, Kun},
  title  = {Position: The {Turing}-completeness of real-world autoregressive transformers relies heavily on context management},
  year   = {2026},
  note   = {arXiv:2605.19514, accepted to the ICML 2026 Position Paper Track}
}

@inproceedings{yang2024looped,
  author    = {Yang, Liu and Lee, Kangwook and Nowak, Robert D. and Papailiopoulos, Dimitris},
  title     = {Looped transformers are better at learning learning algorithms},
  booktitle = {Proceedings of the 12th International Conference on Learning Representations (ICLR)},
  year      = {2024},
  note      = {arXiv:2311.12424}
}

@inproceedings{geiping2025scaling,
  author    = {Geiping, Jonas and McLeish, Sean and Jain, Neel and Kirchenbauer, John and Singh, Siddharth and Bartoldson, Brian R. and Kailkhura, Bhavya and Bhatele, Abhinav and Goldstein, Tom},
  title     = {Scaling up test-time compute with latent reasoning: A recurrent depth approach},
  booktitle = {Advances in Neural Information Processing Systems 38 (NeurIPS 2025)},
  pages     = {46242--46293},
  year      = {2025}
}

@inproceedings{lindner2023tracr,
  author    = {Lindner, David and Kram{\'a}r, J{\'a}nos and Rahtz, Matthew and McGrath, Thomas and Mikulik, Vladimir},
  title     = {Tracr: Compiled transformers as a laboratory for interpretability},
  booktitle = {Advances in Neural Information Processing Systems 36 (NeurIPS 2023)},
  year      = {2023},
  note      = {arXiv:2301.05062}
}

@inproceedings{angluin2024masked,
  author    = {Angluin, Dana and Chiang, David and Yang, Andy},
  title     = {Masked hard-attention transformers recognize exactly the star-free languages},
  booktitle = {Proceedings of the Conference on Language Modeling (COLM)},
  year      = {2024},
  note      = {arXiv:2310.13897; introduces Boolean RASP (B-RASP)}
}

@inproceedings{yang2024counting,
  author    = {Yang, Andy and Chiang, David},
  title     = {Counting like transformers: Compiling temporal counting logic into softmax transformers},
  booktitle = {Proceedings of the Conference on Language Modeling (COLM)},
  year      = {2024},
  note      = {arXiv:2404.04393; introduces C-RASP}
}

@inproceedings{zhou2024algorithms,
  author    = {Zhou, Hattie and Bradley, Arwen and Littwin, Etai and Razin, Noam and Saremi, Omid and Susskind, Josh and Bengio, Samy and Nakkiran, Preetum},
  title     = {What algorithms can transformers learn? A study in length generalization},
  booktitle = {Proceedings of the 12th International Conference on Learning Representations (ICLR)},
  year      = {2024},
  note      = {arXiv:2310.16028; introduces RASP-L}
}

@article{strobl2025transducers,
  author    = {Strobl, Lena and Angluin, Dana and Chiang, David and Rawski, Jonathan and Sabharwal, Ashish},
  title     = {Transformers as transducers},
  journal   = {Transactions of the Association for Computational Linguistics},
  volume    = {13},
  pages     = {200--219},
  year      = {2025},
  doi       = {10.1162/tacl_a_00736},
  note      = {arXiv:2404.02040; extends B-RASP to transductions}
}

@inproceedings{huang2025formal,
  author    = {Huang, Xinting and Yang, Andy and Bhattamishra, Satwik and Sarrof, Yash and Krebs, Andreas and Zhou, Hattie and Nakkiran, Preetum and Hahn, Michael},
  title     = {A formal framework for understanding length generalization in transformers},
  booktitle = {Proceedings of the 13th International Conference on Learning Representations (ICLR)},
  year      = {2025},
  note      = {arXiv:2410.02140}
}

\appendix

\section{Congruent queue lengths}\label{app:congruence}

Grouping $L$ consecutive steps of a synchronous $K$-queue machine into a single block requires that the front of queue $r$ at global step $Li+l$ be the symbol written at the same intra-block offset of an earlier block. That symbol was written at step $Li+l-\ell_r+1$, so the alignment
\[
Li+l-\ell_r+1 \;=\; L(i-w_r+1)+l
\]
holds for all $i$ and $0\le l<L$ if and only if
\[
\ell_r\equiv1\pmod L,
\qquad
w_r=\frac{\ell_r-1}{L}+1 .
\]

In the conversion of Fact~\ref{fact:queue}, level $g$ has two content-half queues of length $B^g$ and one buffer queue of length $2B^g$, where $B=\lceil s(n)^{1/k}\rceil$. We split the buffer into two queues of length $B^g$, so that all four physical queues per stack at the level have the same length.

\begin{lemma}\label{lem:congruence}
Prescribe $L=L(n)\ge1$ and choose a fixed integer $k\ge1$ satisfying $s(n)^{1/k}=\Omega(L)$. Let $B=\lceil s(n)^{1/k}\rceil$ and
\[
b=\min\bigl\{\,b'\ge\max\{B,\ L+1\}\ :\ b'\equiv1\pmod L\,\bigr\},
\qquad
\ell_g=b^{\,g}\qquad(1\le g\le k).
\]
Then the conversion can be implemented with $K=8kk'$ queues, all four queues per stack at level $g$ having length $\ell_g$. Every queue length satisfies $\ell_g\equiv1\pmod L$, and the resulting machine runs in $O(t(n)s(n)^{1/k})$ steps using $O(k's(n))$ queue space. Moreover,
\[
w_r=\frac{\ell_r-1}{L}+1\ge2,
\qquad
w=O\!\left(\frac{s(n)}{L}\right).
\]
\end{lemma}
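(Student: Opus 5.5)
I will prove Lemma~\ref{lem:congruence} in three parts. First I fix the base $b$ and bound it. Then I argue that Fact~\ref{fact:queue}'s conversion tolerates replacing $B$ by $b$ and splitting the buffers. Finally I read off the congruence and the window bounds.

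\emph{Step 1: choice of $b$.} Among any $L$ consecutive integers starting at $\max\{B,L+1\}$, exactly one is $\equiv1\pmod L$. Hence $b$ is well defined and
\[
\max\{B,L+1\}\le b\le \max\{B,L+1\}+L-1 .
\]
Since $B=\lceil s(n)^{1/k}\rceil$ and $s(n)^{1/k}=\Omega(L)$, we get $b\le B+2L=O(B)=O(s(n)^{1/k})$. Consequently $\ell_g=b^g=O(B^g)$, and in particular $\ell_k=O(s(n))$, because $k$ is fixed and the constant $c^k$ is absorbed. The congruence is then immediate: $b\equiv1$ gives $b^g\equiv1\pmod L$ for every $g\ge1$.

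\emph{Step 2: the conversion with base $b$.} The hierarchical stack simulation cited in Fact~\ref{fact:queue} uses only two properties of $B$. The top level must have capacity at least $s(n)$, and consecutive level lengths must divide one another, so that block transfers between levels align. Both properties survive replacing $B$ by any $b\ge B$: we have $b^k\ge s(n)$, and $b^{g}\mid b^{g+1}$. The amortized cost per simulated TM step is $O(\sum_g b^g/b^{g-1})=O(kb)=O(s(n)^{1/k})$. This gives $O(t(n)s(n)^{1/k})$ total steps, and total space $\sum_g O(b^g)=O(b^k)=O(s(n))$ per stack.

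Next, each buffer of length $2b^g$ is split into two queues of length $b^g$ used in tandem. The queue pair acts as one buffer: a symbol is routed into the first half and forwarded to the second half after $b^g$ steps. This routing is handled by the finite control, which I take to be a constant-size modification of $\delta$. That gives $4$ queues per level per stack, and with $2$ stacks per tape, $k$ levels, and $k'$ tapes, $K=8kk'$. The input queue needs a length $\equiv1\pmod L$ and $\ge n$. I assign it length $\ell_k$, which is allowed since $n\le s(n)\le b^k$, and this keeps queue $0$ longest.

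\emph{Step 3: windows.} Set $w_r=(\ell_r-1)/L+1$. It is an integer by Step 1, and $w_r\ge2$ because $\ell_r\ge b\ge L+1$. Also $w=\max_r w_r=(b^k-1)/L+1=O(s(n)/L)$ by the bound $b^k=O(s(n))$.

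\emph{Main obstacle.} I expect the delicate part to be Step 2. Two things must be checked there. First, the cited level analysis must really depend only on divisibility and the lower bound $b\ge B$, not on the exact value $\lceil s^{1/k}\rceil$. Second, the split buffer must preserve step-equivalence, meaning that no extra stall steps are introduced by the synchronous lockstep popping. I would try to treat the concatenation of the two half-buffers as a single FIFO of length $2b^g$: the finite control always forwards the front of the first half to the rear of the second. The machine is then step-for-step identical to the original.
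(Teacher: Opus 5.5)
Your proposal is correct and follows essentially the same route as the paper's proof. It uses the minimality bound $b\le\max\{B,L+1\}+L-1=O(s(n)^{1/k})$ and the observation that replacing $B$ by $b\ge B$ preserves divisibility and capacity. It splits each length-$2\ell_g$ buffer step for step into two length-$\ell_g$ queues, giving $2\cdot4kk'=8kk'$ queues, and reads off $\ell_g\equiv1\pmod L$, $w_r\ge2$ from $b\ge L+1$, and $w=O(s(n)/L)$ from $b^k=\Theta(s(n))$, as the paper does; your explicit forwarding timing for the tandem buffer and your input-queue assignment only spell out details the paper asserts without changing the argument.
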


\begin{proof}
Run the conversion of Fact~\ref{fact:queue} with geometric base $b$. Replacing $B$ by $b\ge B$ preserves the divisibility between consecutive levels and provides at least the original capacity. Split each length-$2\ell_g$ buffer, as any queue of length $m_1+m_2$ can be simulated step-for-step by two queues of lengths $m_1$ and $m_2$; each stack now uses four length-$\ell_g$ queues per level, so a $k'$-tape machine uses $2\cdot4kk'=8kk'$ queues. The split simulation is step-for-step equivalent to the original one.

By minimality,
\[
b\le\max\{B,L+1\}+L-1=O(s(n)^{1/k}),
\]
because $L=O(s(n)^{1/k})$. Together with $b\ge B$, this gives $b=\Theta(s(n)^{1/k})$ and $\ell_k=b^k=\Theta(s(n))$. The conversion therefore takes $O(t(n)b)=O(t(n)s(n)^{1/k})$ steps and $O(k'\sum_{g=1}^k\ell_g)=O(k's(n))$ queue space. Finally, $b\equiv1\pmod L$ implies $\ell_g=b^g\equiv1\pmod L$, while $b\ge L+1$ gives $w_r\ge2$; hence $w=\Theta(\ell_k/L)=O(s(n)/L)$.
\end{proof}

The construction introduces no additional transitions, so $\tau(n)=O(t(n)s(n)^{1/k})$, the run length used in Proposition~\ref{prop:queue} and Theorem~\ref{thm:main}.

\section{Transformer components}\label{app:components}

This appendix specifies the transformer of Theorem~\ref{thm:main} / Proposition~\ref{prop:queue} in full. Throughout, $M = (\Sigma, Q, \delta)$ is the congruent synchronous $K$-queue machine of Proposition~\ref{prop:queue}, with a fixed alphabet $\Sigma$, $|Q| = c$, halting states normalized as absorbing (Section~\ref{sec:models}), and lengths $\ell_r = L(w_r - 1) + 1$ with integers $w_r \ge 2$; $w = \max_r w_r$, $s_{\max} = \max_r \ell_r$, so $wL = s_{\max} - 1 + L$. The model is the looped transformer of Definition~\ref{def:looped} with one decoder layer ($p = 1$), $H = K$ heads (head $r$ serves queue $r$), key dimension $d_k = K$, value dimension $d_v = |\Sigma|$, and loop count $L$; the same block, embedding, and output head are reused at every iteration and every pass. Fix an indexing map $\iota:\Sigma\to\{1,\dots,|\Sigma|\}$.

\paragraph{Indexing conventions.} Coordinates of the residual stream are numbered from $0$, as in the table of \ref{appsub:vocab}: coordinate $0$ holds the constant, the state block occupies coordinates $1,\dots,c$ with state $q$ (states counted from $0$) at coordinate $1+q$, and each symbol or front block $r\in\{0,\dots,K-1\}$ occupies $|\Sigma|$ consecutive coordinates whose within-block offsets are the $1$-based positions $\iota(\alpha)\in\{1,\dots,|\Sigma|\}$ of symbols $\alpha$. Coordinates of value space ($\mathbb{R}^{d_v}$) and the concatenated head outputs ($\mathbb{R}^{|\Sigma|K}$) are numbered from $1$, while coordinates of key space ($\mathbb{R}^{d_k}$) are indexed by the head number $r\in\{0,\dots,K-1\}$: $e_r$ denotes the standard basis vector with index $r$, and $e_0$ exists. Matrix entries are written $W[\text{output coordinate},\,\text{input coordinate}]$, using the conventions of the respective spaces. The attention projections take the residual stream as input: thus $W_q^{(r)}[r,0]=1$ reads the constant (residual coordinate $0$) into key coordinate $r$, while $W_v^{(r)}[\iota(\alpha),\,c+|\Sigma|r+\iota(\alpha)]=1$ reads symbol block $r$, offset $\iota(\alpha)$, into value coordinate $\iota(\alpha)$.

\subsection{Vocabulary and embedding}\label{appsub:vocab}

$\mathcal{V} = (Q \times \Sigma^K) \cup \{\mathrm{ACCEPT}, \mathrm{REJECT}\}$. The generated token at position $i$ is $(q_{Li}, \sigma_{Li})$ --- the state and the $K$-tuple appended at log index $Li$ --- or, at halting, one of the two terminal tokens (\ref{appsub:output}).

$d = 2|\Sigma|K + c + 1$, laid out in blocks:

\begin{center}
\begin{tabular}{llp{0.37\linewidth}}
\toprule
coordinates & block & content \\
\midrule
$0$ & constant & $1$ \\
$1, \dots, c$ & state & one-hot $e_q$ of the state $q$ \\
$c{+}|\Sigma|r{+}1, \dots, c{+}|\Sigma|(r{+}1)$ & symbol $r$ & one-hot of $\sigma^{(r)}\in\Sigma$ \\
$c{+}|\Sigma|K{+}|\Sigma|r{+}1,$ \newline $\dots,\ c{+}|\Sigma|K{+}|\Sigma|(r{+}1)$ & front $r$ & one-hot of the attention-retrieved front; $0$ at embedding \\
\bottomrule
\end{tabular}
\end{center}

$\mathrm{emb}(q, \sigma)$ equals the displayed row of the table above,
\[
\mathrm{emb}(q, \sigma) = \bigl(1,\ e_q,\ \mathrm{onehot}(\sigma^{(0)}), \dots, \mathrm{onehot}(\sigma^{(K-1)}),\ 0, \dots, 0\bigr);
\]
terminal tokens take $\mathrm{emb}(\mathrm{ACCEPT}) = \mathrm{emb}(q_{\mathrm{acc}}, \#^K)$ and $\mathrm{emb}(\mathrm{REJECT}) = \mathrm{emb}(q_{\mathrm{rej}}, \#^K)$ (generation stops at either).

\subsection{Relative positional encoding}\label{appsub:pe}

The relative positional encoding (the form of \citealp{shaw2018self}; \citealp[Eq.~(2)]{liwang2025efficient}): for distance $i-j\in\{0,\dots,w-1\}$,
\[
\mathrm{pos}(i{-}j) \;=\; \Delta\sum_{r:\, i-j = w_r-1} e_r \;\in\; \mathbb{R}^{d_k},
\]
where $\Delta>0$ is the fixed constant of Section~\ref{sec:positional} and $e_r$ is the basis vector of head coordinate $r$. Head $r$ retrieves offset $w_r-1$ by the query $e_r$, so $\langle e_r,\,\mathrm{pos}(i-j)\rangle = \Delta[\,i-j=w_r-1\,]$. The encoding is unlearnable and depends on $(\ell_0,\dots,\ell_{K-1},L)$ only --- the learnable parameter vector is one and the same for all input lengths, the analogue of \citep[Remark~2]{liwang2025efficient}.

\subsection{The attention heads}\label{appsub:heads}

For head $r \in \{0, \dots, K-1\}$, the matrices $W_q^{(r)}, W_k^{(r)} \in \mathbb{R}^{d_k \times d}$ and $W_v^{(r)} \in \mathbb{R}^{d_v \times d}$:

\begin{itemize}
\item \textbf{Query} $W_q^{(r)}$: single nonzero entry $W_q^{(r)}[r, 0] = 1$, copying the constant coordinate of $h$ onto key coordinate $r$. Then $W_q^{(r)}\, h_i^{(l)} = e_r$ for every token, every iteration.
\item \textbf{Key} $W_k^{(r)} = 0$: the keys carry no content; the positional encoding $\mathrm{pos}(i-j)$ of \ref{appsub:pe} enters the key side of the scores (Section~\ref{sec:model}). In the cache only the value entries are informative.
\item \textbf{Score}: $\mathrm{score}_r(i,j) = \langle W_q^{(r)} h_i^{(l)},\ W_k^{(r)} h_j^{(l)} + \mathrm{pos}(i-j)\rangle = \langle e_r,\,\mathrm{pos}(i-j)\rangle = \Delta[\, i - j = w_r - 1\,]$. Over the window $\{i-w+1,\dots,i\}$ this is $\Delta$ at the single position $j^*=i-w_r+1$ (in range at every pass, since $i\ge w-1$ and $w_r\le w$) and $0$ elsewhere. Self-attention has score $0$ because $w_r\ge2$. The unique maximum has gap $\Delta$ (Section~\ref{sec:positional}), so hardmax returns the one-hot vector $s^{(l)}_{r,i,\cdot}=\mathbf{1}_{j^*}$.
\item \textbf{Value} $W_v^{(r)}$: $W_v^{(r)}[\,\iota(\alpha),\ c+|\Sigma|r+\iota(\alpha)\,]=1$ for $\alpha\in\Sigma$ --- the projection onto symbol block $r$. With the write-time invariant (Lemma~\ref{lem:invariant}), $W_v^{(r)}\, h_j^{(l)} = \mathrm{onehot}\big(\sigma^{(r)}_{Lj + l}\big)$, hence the head output
\[
a^{(l)}_{i,r} \;=\; \mathrm{onehot}\Big(\sigma^{(r)}_{L(i - w_r + 1) + l}\Big) \;=\; \mathrm{onehot}\Big(\sigma^{(r)}_{Li + l - \ell_r + 1}\Big),
\]
the front of queue $r$ at machine step $Li + l$ --- the retrieval identity, using $\ell_r - 1 = L(w_r - 1)$.
\end{itemize}

\subsection{Head combination}\label{appsub:combination}

$\mathrm{concat}_r\, a^{(l)}_{i,r} \in \mathbb{R}^{|\Sigma|K}$; $W_O \in \mathbb{R}^{d \times |\Sigma|K}$ with $W_O[\,c+|\Sigma|K+|\Sigma|r+\iota(\alpha),\ |\Sigma|r+\iota(\alpha)\,]=1$ for every $r$ and $\alpha\in\Sigma$, all other entries $0$; $b = 0$. The residual update $\tilde{h}_i^{(l)} = h_i^{(l)} + W_O\,\mathrm{concat}$ writes the $K$ retrieved fronts into the front blocks; the constant, state, and symbol blocks pass through unchanged.

\subsection{Feed-forward network: the transition}\label{appsub:ff}

$\mathrm{FF}$ is a two-layer ReLU network of width $|Q| \cdot |\Sigma|^K + O(d)$ (a machine constant) realizing $\mathrm{FF}(\tilde{h}) + \tilde{h} = (1,\ e_{q'},\ \mathrm{onehot}(\sigma'),\ 0, \dots, 0)$ where $(q', \sigma') = \delta(q, f)$, with $q$ in the state block, $\sigma$ in the symbol blocks, and $f$ in the front blocks of $\tilde h$. For the transition lookup, define
\[
u_{(\hat{q}, \hat{f})}(\tilde{h}) \;=\; 2\,\mathrm{ReLU}\Big(\tilde{h}_{1+\hat{q}} + \textstyle\sum_{r} \tilde{h}_{c+|\Sigma|K+|\Sigma|r+\iota(\hat{f}^{(r)})} - (K + 1) + \tfrac12\Big) \;\in\; \{0, 1\},
\]
equal to $1$ exactly on the table row $(\hat{q}, \hat{f})$ matching $\tilde{h}$; then
\[
\begin{aligned}
\mathrm{FF}(\tilde{h}) \;={}&
\sum_{(\hat{q}, \hat{f})} u_{(\hat{q}, \hat{f})}(\tilde{h}) \cdot
\big(0,\ e_{\delta_1(\hat{q},\hat{f})},\ \mathrm{onehot}(\delta_2(\hat{q},\hat{f})),\ 0^{|\Sigma|K}\big)\\
&-\big(0,\ e_q,\ \mathrm{onehot}(\sigma),\ \mathrm{onehot}(f)\big),
\end{aligned}
\]
where $\mathrm{onehot}(f)$ is the concatenation of the $K$ one-hot front blocks. The linear subtraction is implemented inside the two-layer ReLU network by $O(d)$ passthrough units: every relevant coordinate of $\tilde h$ is nonnegative on the invariant domain, so $\mathrm{ReLU}(\tilde h_a)=\tilde h_a$. Coordinate $0$ is left unchanged. Weights and biases belong to a fixed machine-dependent set, including the threshold bias $-(K+\tfrac12)$; total parameter bit-size is $O((|Q||\Sigma|^K+d)d)$, constant for a fixed machine and independent of $n$ and $L$. On invariant states, all intermediate sums range over a fixed finite set of half-integers, so a fixed machine-dependent numerical format suffices for exact evaluation. The absorbing convention for halting states (Section~\ref{sec:models}) is part of $\delta$; mid-pass halting needs no extra machinery.

\subsection{Output head}\label{appsub:output}

$W_{\mathrm{out}} \in \mathbb{R}^{|\mathcal{V}| \times d}$. For each $(q, \sigma) \in Q \times \Sigma^K$: $W_{\mathrm{out}}[(q,\sigma),\, 1 + q] = 1$ and $W_{\mathrm{out}}[(q,\sigma),\, c+|\Sigma|r+\iota(\sigma^{(r)})] = 1$ for every $r$; terminal rows $W_{\mathrm{out}}[\mathrm{ACCEPT},\, 1 + q_{\mathrm{acc}}] = W_{\mathrm{out}}[\mathrm{REJECT},\, 1 + q_{\mathrm{rej}}] = K + 2$. Since $h_i^{(L)}$ has exactly one nonzero per one-hot block, the ordinary row of the emitted token scores $K + 1$, every other ordinary row less, and the terminal rows score $K + 2$ exactly when the state block is halting (and $0$ otherwise): the argmax is $(q_{L(i+1)}, \sigma_{L(i+1)})$ at non-halting passes and $\mathrm{ACCEPT}$/$\mathrm{REJECT}$ at halting ones. Entries are $O(K)$, constant for a fixed machine.

\subsection{The loop and the input interface}\label{appsub:loop}

One layer per unit (Definition~\ref{def:looped} with $p = 1$): $h_i^{(l+1)} = \mathrm{FF}(\tilde{h}_i^{(l)}) + \tilde{h}_i^{(l)}$ for $l = 0, \dots, L-1$, the matrices of \ref{appsub:vocab}--\ref{appsub:output} identical at every iteration. The per-iteration KV cache (Section~\ref{sec:model}) holds, for each position $j$ and iteration $l$, the value entries $W_v^{(r)} h_j^{(l)} = \mathrm{onehot}(\sigma^{(r)}_{Lj+l})$, $r = 0, \dots, K-1$ --- the keys are zero (\ref{appsub:heads}), the positional encoding entering only the scores (\ref{appsub:pe}). The input interface is the initialization of Section~\ref{sec:initialization}: the value entries of positions $0, \dots, w-2$ at all iterations contain the $K$-tuples $\sigma_0, \dots, \sigma_{s_{\max}-2}$ of the log, and the current token at position $w-1$ is $(q_{s_{\max}-1}, \sigma_{s_{\max}-1})$. Generation starts at pass $w-1$.

\subsection{Faithfulness}\label{appsub:faithful}

\begin{proposition}\label{prop:faithful}
Under the initialization of Section~\ref{sec:initialization}, the transformer of \ref{appsub:vocab}--\ref{appsub:loop} emits, at each pass $i \ge w-1$, the token $(q_{L(i+1)}, \sigma_{L(i+1)})$, and after $\lceil \tau(n)/L \rceil$ passes the terminal token matching $M$'s verdict on $x$; there is no other behavior.
\end{proposition}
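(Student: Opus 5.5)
The plan is an induction on the pass index $i\ge w-1$, carried at the level of the concrete matrices of \ref{appsub:vocab}--\ref{appsub:loop}. Within each pass there is an inner induction on the iteration $l\in\{0,\dots,L-1\}$.

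\textbf{Invariant.} The strengthened invariant $I(i,l)$ has two parts. First, for every position $j<i$ and every iteration $l'$, the cached value entry of head $r$ is $W_v^{(r)}h_j^{(l')}=\mathrm{onehot}(\sigma^{(r)}_{Lj+l'})$. Second, the residual entering iteration $l$ of pass $i$ equals $\mathrm{emb}(q_{Li+l},\sigma_{Li+l})$ exactly, with zero front blocks.

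\textbf{Base case.} At $i=w-1$, $l=0$ the invariant is the initialization of Section~\ref{sec:initialization}. The entries at positions $0,\dots,w-2$ are $\sigma_0,\dots,\sigma_{s_{\max}-2}$, the current token is $(q_{s_{\max}-1},\sigma_{s_{\max}-1})$, and $L(w-1)=s_{\max}-1$.

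\textbf{Inductive step.} This is the arithmetic already in Lemma~\ref{lem:invariant}, made coordinate-level.
\begin{itemize}
\item By \ref{appsub:heads}, every score row of head $r$ is $\Delta$ at $j^*=i-w_r+1$ and $0$ elsewhere. Here $j^*$ lies in the window because $w_r\le w$ and $i\ge w-1$, and $j^*\neq i$ because $w_r\ge2$. So hardmax returns exactly the cached one-hot $\mathrm{onehot}(\sigma^{(r)}_{Li+l-\ell_r+1})$, using $\ell_r-1=L(w_r-1)$. This is the front of queue $r$ at step $Li+l$ by Definition~\ref{def:log}. The index is valid since $Li+l\ge s_{\max}-1$.
\item $W_O$ writes these fronts into the front blocks, so $\tilde h$ is a valid table input.
\item The indicator units $u_{(\hat q,\hat f)}$ fire exactly on the matching row. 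On one-hot inputs, the ReLU argument is $\tfrac12$ at the match and at most $-\tfrac12$ otherwise. The passthrough subtraction then yields exactly $\mathrm{emb}(\delta(q_{Li+l},\mathrm{fronts}))=\mathrm{emb}(q_{Li+l+1},\sigma_{Li+l+1})$ with fronts cleared.
\end{itemize}
This gives the residual part of $I(i,l+1)$. The cache part holds because the entry written at $(i,l+1)$ is $W_v^{(r)}$ applied to that residual and is frozen thereafter. Reads at pass $i$ touch only positions $j^*<i$, all already fixed by the hypothesis.

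\textbf{Pass boundary.} After $l=L-1$, the output-head computation of \ref{appsub:output} shows the argmax. In the non-halting case the correct ordinary row scores $K+1$, every other ordinary row scores at most $K$, and the terminal rows score $0$. So the emitted token is $(q_{L(i+1)},\sigma_{L(i+1)})$, its embedding restores $I(i+1,0)$, and its iteration-$0$ value completes the cache claim for position $i$.

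\textbf{Termination and verdict.} Let $m_h$ be the first halting index. Halting states are absorbing, so if $Li+l+1\ge m_h$ the state block stays at that same halting state for the rest of the pass. Pass $i$ ends at log index $L(i+1)$, and the first pass with $L(i+1)\ge m_h$ has $i+1-(w-1)=\lceil (m_h-(s_{\max}-1))/L\rceil=\lceil\tau(n)/L\rceil$. On that pass the halting row scores $K+2>K+1$, so the output is ACCEPT or REJECT according to $M$'s halting state, and generation stops. Every earlier pass is non-halting and so emits an ordinary token; the case $\tau(n)=1$ from the $\max$ is handled identically. Since every step above is a deterministic equality, there is no other behavior.

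\textbf{Main obstacle.} The delicate point is the exactness bookkeeping: the claim that the residual is \emph{exactly} a clean embedding after each iteration, including the cleared front blocks. I would verify this coordinatewise from the FF formula of \ref{appsub:ff}, checking that all subtracted coordinates are nonnegative so the ReLU passthrough is the identity. Without this, errors could accumulate across $L$ iterations and invalidate the constant-bit claim.
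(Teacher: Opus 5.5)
Your proposal is correct and follows the paper's route. The paper reruns the two-stage (pass, iteration) induction of Lemma~\ref{lem:invariant}, using only that state moves through the residual stream and that value entries are frozen at write time; your checks of the indicator units, the output-head scores, and the count $\lceil\tau(n)/L\rceil$ make that explicit, with one off-by-one slip: the emitted token's iteration-$0$ entry is $V(i+1,0)$, and position $i$'s cache claim is instead completed by $V(i,0)=W_v h_i^{(0)}$, which comes from the residual part of $I(i,0)$.
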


\begin{proof}
The proof of Lemma~\ref{lem:invariant} (Section~\ref{sec:initialization}) rests on two structural facts that hold verbatim for the components above: (a) within a pass, state moves only through the residual stream --- the attention reads same-iteration value entries only, and each iteration's feed-forward network consumes the fronts its attention wrote; (b) value entries are frozen at write time. With the base case supplied by the initialization of Section~\ref{sec:initialization} and the retrieval identity of \ref{appsub:heads}, the two-stage induction of Section~\ref{sec:initialization} goes through unchanged; the output head of \ref{appsub:output} turns the final residual into the token or the verdict, and the absorbing convention makes mid-pass halting consistent.
\end{proof}

\section{The exact softmax kernel contract}\label{app:softmax}

This appendix records the kernel contract of Section~\ref{sec:softmax} exactly, the arithmetic behind the $-256$ threshold, and the two kernel variants. Every attention score is a finite binary32 value (IEEE single precision; $\mathrm{RN}_{32}$ denotes round-to-nearest in binary32). Given a score row $(a_j)_j$ and value vectors $(v_j)_j$, the kernel computes the attention output as follows:

\begin{itemize}
\item (i) compute the row maximum $\mu = \max_j a_j$;
\item (ii) form the shifted inputs $\hat a_j = \mathrm{RN}_{32}(a_j - \mu)$;
\item (iii) apply a fixed binary32 exponential primitive $E_{32}$ with $E_{32}(0) = 1$ and $E_{32}(x) = +0$ for every $x \le -256$;
\item (iv) accumulate the numerator $\sum_j E_{32}(\hat a_j)\, v_j$ and the normalizer $\sum_j E_{32}(\hat a_j)$ in binary32, dividing only after both sums;
\item (v) cast the quotient once, at the output boundary, to a fixed format containing $0$ and $1$ exactly.
\end{itemize}

The exactness of Lemma~\ref{lem:softmax} additionally requires the winner's value $v_{j^*}$ to be exactly representable in the accumulation format and preserved by the boundary cast; the fixed format of (v) contains $0$ and $1$ exactly, which covers the one-hot values used in this construction (Corollary~\ref{cor:softmax}), and not arbitrary real value vectors.

\paragraph{The threshold.} Since $256 > 150 \ln 2$, we have $e^{-256} < 2^{-150}$, the round-to-nearest midpoint between $0$ and the least positive binary32 subnormal $2^{-149}$, so a correctly rounded binary32 exponential returns $+0$ on every input at most $-256$. Flush-to-zero only strengthens the property. Any cutoff larger than $150\ln 2$ in magnitude has the same property; $256 = 2^8$ is chosen so that the rescaling factor of Corollary~\ref{cor:softmax} is a power of two.

\begin{remark}\label{rem:kernel}
(1)~The lemma requires a binary32 exponential that returns $+0$ below a fixed threshold, without positive floor or logit soft-capping. (2)~Subtracting the maximum converts overflow into underflow and thereby helps to simulate hard selection. We include it in our softmax kernel because it is standard in production implementations.\footnote{PyTorch's CPU and CUDA softmax kernels (\texttt{host\_softmax} and \texttt{SoftMax.cu}'s epilogue $\exp(\mathrm{input}-\mathrm{max})/\mathrm{sum}$), FlashAttention's online softmax (exponentiating $\mathrm{score}\cdot\mathrm{scale}-\mathrm{max}\cdot\mathrm{scale}$), Triton's fused-softmax kernel (``subtract maximum for numerical stability''), and TensorFlow's softmax op (\texttt{shifted\_logits = logits - max}) all evaluate the exponential only on inputs shifted by the row maximum. This observation concerns that one step; it does not establish full conformance of any production kernel, in particular not of online or fused implementations with their own epilogues.} For a kernel that does not subtract the maximum, the positional encoding of Section~\ref{sec:positional} can set the winner's score to $0$ and every other score to $-256$. (3)~If the kernel includes a positive scale, canonically $1/\sqrt{d_k}$, instantiate $\Delta=\lceil 256\sqrt{d_k}\rceil$. Here $d_k=K$ is a machine constant, so $\Delta$ remains $O(1)$.
\end{remark}

\section{Detailed comparison of Turing-machine simulations}\label{app:comparison}

Table~\ref{tab:comparison} compares transformers simulating $(t, s)$-bounded Turing machines, extending the pointer given in Section~\ref{sec:related}.

\begin{table}[htbp]
\centering
\setlength{\tabcolsep}{2pt}
\begin{tabular}{@{}>{\raggedright\arraybackslash}p{0.195\linewidth}>{\raggedright\arraybackslash}p{0.175\linewidth}>{\raggedright\arraybackslash}p{0.10\linewidth}>{\raggedright\arraybackslash}p{0.09\linewidth}>{\raggedright\arraybackslash}p{0.135\linewidth}>{\raggedright\arraybackslash}p{0.235\linewidth}@{}}
\toprule
Source & Precision & Dim. & Depth & Window & CoT per TM-step \\
\midrule
Pérez et al. 2021 & $O(\log t)$ & $O(1)$ & $O(1)$ & $n + t$ & $1$ \\
Bhattamishra et al. 2020 & unbounded & $O(1)$ & $1$ & $n + t$ & $1$ \\
Merrill \& Sabharwal 2024 & $O(\log t)$ & $O(1)$ & $O(1)$ & $n + t$ & $1$ \\
Li et al. 2024 & $O(1)$ & $O(\log t)^{\ddagger}$ & $2$ & $O(t\log t)$ & $O(\log t)$ amort. \\
Qiu et al. 2025 & $O(\log t)$ & $O(1)$ & $O(1)$ & $O(t\log t)$ & $O(\log t)$ amort. \\
Li \& Wang 2025a & $O(1)$ & $O(1)$ & $1$ & $s$ & $s$ \\
Li \& Wang 2025b & $O(1)$ & $O(1)$ & ---$^{\dagger}$ & $s{+}1$ & $O(s^{1/k})$ \\
Jiang et al. 2026$^{\diamond}$ & unbounded & $O(1)$ & $O(1)$ & $n + 2^{O(t)}$ & $1$ per counter step \\
Br\"osamle \& Eckstein 2026$^{\diamond}$ & $O(\log\log s)$ & $O(\log s)$ & $O(\log s)$ & $O(s)$ (SCoT) & $O(1)$ amort. \\
\textbf{This work}$^{\diamond}$ & $O(1)$ & $O(1)$ & $L$ & $O(s/L)$ & $O(s^{1/k}/L)$ \\
\bottomrule
\end{tabular}
\vspace{2pt}
\begin{minipage}{\linewidth}
\footnotesize
$^{\diamond}$ marks constructions using softmax attention (this work via the finite-precision kernel of Section~\ref{sec:softmax}).

$^{\dagger}$ Li \& Wang 2025b report only the head--layer product $K = H{\cdot}L$; their depth is left unspecified.

$^{\ddagger}$ \citet{li2024chain} state $O(\log n)$ embedding dimension under the assumption that $t(n)$ is polynomial, so $\lceil\log_2(t(n)+n)\rceil=O(\log n)$.
\end{minipage}
\caption{Comparing transformers simulating $(t, s)$-bounded Turing machines. Depth is the number of sequential decoder layers as reported by each source; for this work it is the loop count $L$---one shared decoder block executed $L$ times per generated token. This work prescribes $L$ and chooses $k$ subject to $s(n)^{1/k}=\Omega(L)$; Li \& Wang 2025b is the endpoint $L{=}1$.}
\label{tab:comparison}
\end{table}

\paragraph{Notation normalization.} The rows ``Li \& Wang 2025a'' and ``Li \& Wang 2025b'' abbreviate \citet{liwang2025constant} and \citet{liwang2025efficient}, respectively; the row ``Jiang et al.\ 2026'' carries the venue year of \citet{jiang2025softmax}. Depth is printed as $L$ for this work, where the source of each competitor row uses its own layer-count symbol; the main text likewise uses $L$ for the generic decoder depth, which coincides with the loop count for the one-layer loop unit. This work's window and CoT bounds presuppose the admissibility condition $s(n)^{1/k}=\Omega(L)$ and the loaded-cache interface of Section~\ref{sec:initialization}; the initialization cost is excluded from this work's bounds (the comparison rows report the sources' own generation accounting).

\section{Notation}\label{app:notation}

\begin{center}
\begin{tabular}{@{}llp{0.52\linewidth}@{}}
\toprule
Symbol & Formally defined & Meaning \\
\midrule
$t(n),\,s(n)$ & \ref{sec:models} & time and space bounds of the simulated Turing machine \\
$k,\,k'$ & \ref{sec:models} & fixed hierarchy exponent ($s(n)^{1/k}=\Omega(L)$); tape count \\
$K$ & \ref{fact:queue}, \ref{app:congruence} & queue count $=8kk'$; number of attention heads \\
$L$ & \ref{def:looped} & loop count, and the decoder depth: one shared block executed $L$ times per token \\
$m$ & \ref{def:log} & execution-log step index \\
$g$ & \ref{fact:queue} & hierarchy level index ($1\le g\le k$) \\
$(\sigma_m,q_m)$ & \ref{def:log} & execution log: $K$-tuple appended and state at step $m$ \\
$\tau(n)$ & \ref{def:log} & run length after the cache is filled \\
$T_{\mathrm{cot}}$ & \ref{prop:queue} & generated tokens; exactly $\lceil\tau(n)/L\rceil$, asymptotically $O(t(n)s(n)^{1/k}/L)$ \\
$i,\,l$ & \ref{def:looped}, \ref{sec:hardmax} & pass (position) and iteration indices \\
$\ell_r,\,w_r,\,w$ & \ref{def:log}, \ref{app:congruence} & length of queue $r$; $w_r=(\ell_r-1)/L+1$; $w=\max_r w_r$ \\
$s_{\max}$ & \ref{def:log} & $\max_r\ell_r$ \\
$b,\,B$ & \ref{fact:queue}, \ref{app:congruence} & congruent queue base; $B=\lceil s(n)^{1/k}\rceil$ (cited conversion) \\
$d,\,d_k,\,d_v,\,c$ & \ref{sec:tokens}, \ref{appsub:vocab} & residual dimension $2|\Sigma|K+|Q|+1$; key dim.\ $K$; value dim.\ $|\Sigma|$; $|Q|$ \\
$a_{i,j}$ & \ref{sec:softmax} & attention score (row $i$, candidate $j$) \\
$\Delta$ & \ref{sec:positional} & score gap; set to $256$ in Corollary~\ref{cor:softmax} \\
$\mu,\,\hat a_j,\,E_{32}$ & \ref{app:softmax} & kernel row maximum; shifted input; binary32 exponential \\
$p$ & \ref{def:looped} & layers per loop unit ($p=1$ in the construction) \\
\bottomrule
\end{tabular}
\end{center}

\end{document}